\providecommand{\tabularnewline}{\\}
  \theoremstyle{plain}
  \newtheorem{thm}{\protect\theoremname}[section]
  \theoremstyle{plain}
  \newtheorem{question}{\protect\questionname}
  \theoremstyle{definition}
  \newtheorem{example}{\protect\examplename}[section]
  \theoremstyle{remark}
  \newtheorem{rem}{\protect\remarkname}[section]
  \theoremstyle{plain}
  \newtheorem{lem}{\protect\lemmaname}[section]
  \theoremstyle{plain}
  \newtheorem{cor}{\protect\corollaryname}[section]
  \theoremstyle{plain}
  \newtheorem{prop}{\protect\propositionname}[section]
  \theoremstyle{definition}
  \newtheorem{defn}{\protect\definitionname}[section]
  \providecommand{\definitionname}{Definition}
  \providecommand{\examplename}{Example}
  \providecommand{\lemmaname}{Lemma}
  \providecommand{\propositionname}{Proposition}
  \providecommand{\questionname}{Question}
  \providecommand{\remarkname}{Remark}
\providecommand{\corollaryname}{Corollary}
\providecommand{\theoremname}{Theorem}
\begin{document}

\title{Towards Axiomatization and General Results on Strong Emergence Phenomena
Between Lagrangian Field Theories}

\author{Yuri Ximenes Martins\footnote{yurixm@ufmg.br (corresponding author)}\,  and  Rodney Josu\'e Biezuner\footnote{rodneyjb@ufmg.br} }
\maketitle
\noindent \begin{center}
\textit{Departamento de Matem\'atica, ICEx, Universidade Federal de Minas Gerais,}  \\  \textit{Av. Ant\^onio Carlos 6627, Pampulha, CP 702, CEP 31270-901, Belo Horizonte, MG, Brazil}
\par\end{center}
\begin{abstract}
In this paper we propose a formal definition of what is a strong emergence
phenomenon between two parameterized field theories and we present
sufficient conditions ensuring the existence of such phenomena between
two given parameterized Lagrangian field theories. More precisely,
we prove that in an Euclidean background, typical parameterized kinetic
theories emerge from any elliptic multivariate polynomial theories.
Some concrete examples are given and a connection with the phenomenon
of gravity emerging from noncommutativity is made.
\end{abstract}

\section{Introduction \label{sec_introduction}}

$\quad\;\,$The term \emph{emergence phenomenon} has been used for
years in many different contexts. In each of them, Emergence Theory
is the theory which studies those kinds of phenomena. E.g, we have
versions of it in Philosophy, Art, Chemistry and Biology \citep{0,1,2}.
The term is also used many times in Physics, with different meanings
(for a review on the subject, see \citep{3,4}. For an axiomatization
approach, see \citep{5}). This reveals that the concept of emergence
phenomenon is very general and therefore difficult to formalize. Nevertheless,
we have a clue of what it really is: when looking at all those presentations
we see that each of them is about describing a system in terms of
some other system, possibly in different \emph{scales}. Thus, an emergence
phenomenon is about a relation between two different systems, the
\emph{emergence relation}, and a system emerges from another when
it (or at least part of it) can be recovered in terms of the other
system, which is presumably more fundamental, at least in some scale.
The different emergence phenomena in Biology, Philosophy, Physics,
and so on, are obtained by fixing in the above abstract definition
a meaning for system, scale, etc.

Notice that, in this approach, in order to talk about emergence we
need to assume that to each system of interest we have assigned a
\emph{scale}. In Mathematics, scales are better known as \emph{parameters}.
So, emergence phenomena occur between some kinds of \emph{parameterized
systems}. This kind of assumption (that in order to fix a system we
have to specify the scale in which we are considering it) is at the
heart of the notion of effective field theory, where the scale (or
parameter) is governed by Renormalization Group flows \citep{6,7,8,9}.
Notice, in turn, that if a system emerges from some other, then the
second one should be more fundamental, at least in the scale (or parameter)
in which the emergence phenomenon is observed. This also puts Emergence
Theory in the framework of searching for the fundamental theory of
Physics (e.g Quantum Gravity), whose systems should be the minimal
systems relative to the emergence relation \citep{3,4}. The main
problem in this setting is then the existence problem for the minimum.
A very related question is the general existence problem: \emph{given
two systems, is there some emergence relation between them?}

One can work on the existence problem is different levels of depth.
Indeed, since the systems is question are parameterized one can ask
if there exists a correspondence between them in \emph{some scales}
or in \emph{all scales}. Obviously, by requiring a complete correspondence
between them is much more strong than requiring a partial one. On
the other hand, in order to attack the existence problem we also have
to specify which kind of emergence relation we are looking for. Again,
is it a full correspondence, in the sense that the emergent theory
can be fully recovered from the fundamental one, or is it only a 
partial correspondence, through which only certain aspects can be
recovered? Thus, we can say that we have the following four versions
of the existence problem for emergence phenomena.

\begin{table}[H]
\begin{centering}
\begin{tabular}{|c|c|c|c|c|}
\cline{2-5} 
\multicolumn{1}{c|}{} & \emph{weak} & \emph{weak-scale} & \emph{weak-relation} & \emph{strong}\tabularnewline
\hline 
\emph{relation} & partial & full & partial & full\tabularnewline
\hline 
\emph{scales} & some & some & all & all\tabularnewline
\hline 
\end{tabular}
\par\end{centering}
\caption{Types of Emergence}
\end{table}

In Physics one usually works on finding weak emergence phenomena.
Indeed, one typically shows that certain properties of a system can
be described by some other system at some limit, corresponding to
a certain regime of the parameter space. These emergence phenomena
are strongly related with other kind of relation: the \emph{physical
duality}, where two different systems reveal the same physical properties.
One typically builds emergence from duality. For example, AdS/CFT
duality plays an important role in describing spacetime geometry (curvature)
from mechanic statistical information (entanglement entropy) of dual
strongly coupled systems \citep{10,11,12,13,14,15}. 

There are also some interesting examples of weak-scale emergence relations,
following again from some duality. These typically occur when the
action functional of two Lagrangian field theories are equal at some
limit. The basic example is gravity emerging from noncommutativity
following from the duality between commutative and noncommutative
gauge theories established by the Seiberg-Witten maps \citep{16}.
Quickly, the idea was to consider a gauge theory $S[A]$ and modify
it into two different ways: 
\begin{enumerate}
\item by considering $S[A]$ coupled to some background field $\chi$, i.e,
$S_{\chi}[A;\chi]$;
\item by using the Seiberg-Witten map to get its noncommutative analogue
$S_{\theta}[\hat{A};\theta]$.
\end{enumerate}
$\quad\;\,$Both new theories can be regarded as parameterized theories:
the parameter (or scale) of the first one is the background field
$\chi$, while that of the second one is the noncommutative parameter
$\theta^{\mu\nu}$. By construction, the noncommutative theory $S_{\theta}[A;\theta]$
can be expanded in a power series on the noncommutative parameter,
and we can also expand the other theory $S_{\chi}[A;\chi]$ on the
background field, i.e, one can write 
\[
S_{\chi}[A;\chi]=\sum_{i=0}^{\infty}S_{i}[A;\chi^{i}]=\lim_{n\rightarrow\infty}S_{(n)}[A;\chi]\quad\text{and}\quad S_{\theta}[\hat{A};\theta]=\sum_{i=0}^{\infty}S_{i}[A;\theta^{i}]=\lim_{n\rightarrow\infty}S_{(n)}[A;\theta],
\]
where $S_{(n)}[A;\chi]=\sum_{i=0}^{n}S_{i}[A;\chi^{i}]$ and $S_{(n)}[A;\theta]=\sum_{i=0}^{n}S_{i}[A;\theta^{i}]$
are partial sums. One then try to find solutions for the following
question:
\begin{question}
\label{question_int}Given a gauge theory $S[A]$, is there a background
version  $S_{\chi}[A;\chi]$ of it and a number $n$ such that for
every given value $\theta^{\mu\nu}$ of the noncommutative parameter
there exists a value of the background field $\chi(\theta)$, possibly
depending on $\theta^{\mu\nu}$, such that for every gauge field $A$
we have $S_{(n)}[A;\chi(\theta)]=S_{(n)}[A;\theta]$?
\end{question}
Notice that if rephrased in terms of parameterized theories, the question
above is precisely about the existence of a weak-scale emergence between
$S_{\chi}$ and $S_{\theta}$, at least up to order $n$. This can
also be interpreted by saying that, in the context of the gauge theory
$S[A]$, the background fields $\chi$ emerges in some regime from
the noncommutativity of the spacetime coordinates. Since the noncommutative
parameter $\theta^{\mu\nu}$ depends on two spacetime indexes, it
is suggestive to consider background fields of the same type, i.e, 
$\chi^{\mu\nu}$. In this case, there is a natural choice: metric
tensors $g^{\mu\nu}$. Thus, in this setup, the previous question
is about proving that in the given gauge context, gravity emerges
from noncommutativity at least up to a perturbation of order $n$.
This was proved to be true for many classes of gauge theories and
for many values of $n$ \citep{17,18,19,20,21}. On the other hand,
this naturally leads to other two questions:
\begin{enumerate}
\item Can we find some emergence relation between gravity and noncommutativity
in the nonperturbative setting? In other words, can we extend the
weak-scale emergence relation above to a strong one? 
\item Is it possible to generalize the construction of the cited works to
other kind of background fields? In other words, is it possible to
use the same idea in order to show that different fields emerge from
spacetime noncommutativity?
\end{enumerate}
$\quad\;\,$The first of these questions is about finding a strong
emergence phenomena and it has a positive answer in some cases \citep{22,23,24,25}.
The second one, in turn, is about working on finding systematic and
general conditions ensuring the existence (or nonexistence) of emergence
phenomena. At least to the authors knowledge, there are no such studies,
specially focused on the strong emergence between field theories.
It is precisely this point that is the focus of the present work.
Indeed we will:
\begin{enumerate}
\item based on Question \ref{question_int}, propose an axiomatization for
the notion of \emph{strong emergence} between field theories;
\item establish sufficient conditions ensuring that a given Lagrangian field
theory emerges from each theory belonging to a certain class of theories;
\item show that the given sufficient conditions are not necessary conditions.
\end{enumerate}
$\quad\;\,$In the remainder of this Introduction, let us be a bit
more explicit about our aim. For us, a \emph{field theory} over a
oriented $n$-dimensional manifold $M$ (regarded as the spacetime)
is defined by an action functional $S[\varphi]=\int_{M}\mathcal{L}(x,\varphi,\partial\varphi)d^{n}x$,
where $\mathcal{L}$ is the Lagrangian density and $\varphi$ is some
generic field (section of some real or complex vector bundle $E\rightarrow M$,
the \emph{field bundle}). A \emph{parameterized field theory} consists
of another bundle $P\rightarrow M$ (the \emph{parameter bundle}),
a subset $\operatorname{Par}(P)\subset\Gamma(P)$ of global sections
(the \emph{parameters})\emph{ }and a collection $S_{\varepsilon}[\varphi]=\int_{M}\mathcal{L}_{\varepsilon}(x,\varphi,\partial\varphi)d^{n}x$
of field theories, one for each parameter $\varepsilon\in\operatorname{Par}(P)$.
A more suggestive notation should be $S[\varphi;\varepsilon]$ and
$\mathcal{L}(x,\varphi,\partial\varphi;\varepsilon)$. So, e.g, for
the trivial parameter bundle $P\simeq M\times\mathbb{K}$ we have
$\Gamma(P)\simeq C^{\infty}(M;\mathbb{K})$ and in this case we say
that we have \emph{scalar parameters.} If we consider only scalar
parameters which are constant functions, then a parameterized theory
becomes the same thing as a 1-parameter family of field theories.
Here, and throughout the paper, $\mathbb{K}=\mathbb{R}$ or $\mathbb{K}=\mathbb{C}$
depending on whether the field bundle in consideration is real or
complex.

We will think of a parameter $\varepsilon$ as some kind of ``physical
scale'', so that for two given parameters $\varepsilon$ and $\varepsilon'$,
we regard $S[\varphi;\varepsilon]$ and $S[\varphi;\varepsilon']$
as \emph{the same theory} in two different physical scales. Notice
that if $P$ has rank $l$, then we can locally write $\varepsilon=\sum\varepsilon^{i}e_{i}$,
with $i=1,...,l$, where $e_{i}$ is a local basis for $\Gamma(P)$.
Thus, locally each physical scale is completely determined by $l$
scalar parameters $\varepsilon^{i}$ which are the fundamental ones.
In terms of these definitions, Question \ref{question_int} has a
natural generalization:
\begin{question}
\label{problem_1}Let $S_{1}[\varphi;\varepsilon]$ and $S_{2}[\psi;\delta]$
be two parameterized theories defined on the same spacetime $M$,
but possibly with different field bundles $E_{1}$ and $E_{2}$, and
different parameter bundles $P_{1}$ and $P_{2}$. Arbitrarily giving
a field $\varphi\in\Gamma(E_{1})$ and a parameter $\varepsilon\in\operatorname{Par}_{1}(P_{1})$,
can we find some field $\psi(\varphi)\in\Gamma(E_{2})$ and some parameter
$\delta(\varepsilon)\in\operatorname{Par}_{2}(P_{2})$ such that $S_{1}[\varphi;\varepsilon]=S_{2}[\psi(\varphi);\delta(\varepsilon)]$?
In more concise terms, are there functions $F:\operatorname{Par}_{1}(P_{1})\rightarrow\operatorname{Par}_{2}(P_{2})$
and $G:\Gamma(E_{1})\rightarrow\Gamma(E_{2})$ such that $S_{1}[\varphi;\varepsilon]=S_{2}[G(\varphi);F(\varepsilon)]$?
\end{question}
We say that the theory $S_{1}[\varphi;\varepsilon]$ \emph{emerges}
\emph{from the theory }$S_{2}[\psi;\delta]$ if the problem above
has a positive solution, i.e, if we can fully describe $S_{1}$ in
terms of $S_{2}$. Notice, however, that as stated the emergence problem
is fairly general. Indeed, if $P_{1}$ and $P_{2}$ have different
ranks, then, by the previous discussion, this means that the parameterized
theories $S_{1}$ and $S_{2}$ have a different number of fundamental
scales, so that we should not expect an emergence relation between
them. This leads us to think of considering only the case in which
$P_{1}=P_{2}$. However, we could also consider the situations in
which $P_{1}\neq P_{2}$, but $P_{2}=f(P_{1})$ is some nice function
of $P_{1}$, e.g, $P_{2}=P_{1}\times P_{1}\times...\times P_{1}$.
In these cases the fundamental scales remain only those of $P_{1}$,
since from them we can generate those in the product. Throughout this
paper we will also work with different theories defined on the same
fields, i.e, $E_{1}=E_{2}$. This will allow us to search for emergence
relations in which $G$ is the identity map $G(\varphi)=\varphi$. 

Hence, after these hypotheses, we can rewrite our main problem, whose
affirmative solutions axiomatize the notion of strong emergence we
are searching for:
\begin{question}
\label{problem_2}Let $S_{1}[\varphi;\varepsilon]$ and $S_{2}[\varphi;\delta]$
be two parametrized theories defined on the same spacetime $M$, on
the same field bundle $E$ and on the parameter bundles $P_{1}$ and
$P_{2}=f(P_{1})$, respectively. Does there exists some map $F:\operatorname{Par}_{1}(P_{1})\rightarrow\operatorname{Par}_{2}(f(P_{1}))$
such that $S_{1}[\varphi;\varepsilon]=S_{2}[\varphi,F(\varepsilon)]$?
\end{question}
Our plan is to show that the problem in Question \ref{problem_2}
has an affirmative solution in some interesting cases. We begin by
noticing that when working in a spacetime without boundary, after
integration by parts and using Stoke's theorem, many of the typical
field theories can be stated, at least locally, in the form $\mathcal{L}(x,\varphi,\partial\varphi)=\langle\varphi,D\varphi\rangle$,
where $\langle\varphi,\varphi'\rangle$ is a nondegenerate pairing
on the space of fields $\Gamma(E)$ and $D:\Gamma(E)\rightarrow\Gamma(E)$
is a differential operator of degree $d$, which means that it can
be locally written as $D\varphi(x)=\sum_{\vert\alpha\vert\leq d}a_{\alpha}(x)\partial^{\alpha}\varphi$,
where $\alpha=(\alpha_{1},...,\alpha_{r})$ is some mult-index, $\vert\alpha\vert=\alpha_{1}+...+\alpha_{r}$
is its degree and $\partial^{\alpha}=\partial_{1}^{\alpha_{1}}\circ...\circ\partial_{r}^{\alpha_{r}}$,
with $\partial_{i}^{l}=\partial^{l}/\partial^{l}x_{i}$. Let $\operatorname{Diff}^{d}(E;E)$
denote the space of differential operators of degree $l$. This is
the case, e.g, of $\varphi^{3}$ and $\varphi^{4}$ scalar field theories,
the standard spinorial field theories as well as Yang-Mills theories.
More generally, recall that the first step in building the Feynman
rules of a field theory is to find the (kinematic par of the) operator
$D$ and take its ``propagator'' .

Typically, the pairing $\langle\varphi,\varphi'\rangle$ is symmetric
(resp. skew-symmetric) and the operator $D$ is formally self-adjoint
(resp. formally anti-self-adjoint) relative to that pairing. Furthermore,
$\langle\varphi,\varphi'\rangle$ is usually a $L^{2}$-pairing induced
by a semi-Riemannian metric $g$ on the field bundle $E$ and/or on
the spacetime $M$, while $D$ is usually a generalized Laplacian
or a Dirac-type operator relative to $g$ \citep{costello2011renormalization}.
For example, this holds for the concrete field theories (scalar, spinorial
and Yang-Mills) above. The skew-symmetric case generally arises in
gauge theories (BV-BRST quantization) after introducing the Faddeev-Popov
ghosts/anti-ghosts and it depends on the grading introduced by the
ghost number \citep{costello2011renormalization}.

Another remark, still having in mind the concrete situations above,
is that if the metric $g$ inducing the pairing $\langle\varphi,\varphi'\rangle$
is actually Riemannian (which means that the gravitational background
is Euclidean), then $\langle\varphi,\varphi'\rangle$ becomes a genuine
$L^{2}$-inner product and $D$ is elliptic and extends to a bounded
self ajoint operator between Sobolev spaces \citep{donaldson1990geometry,freed2012instantons}.
Working with elliptic operators is very useful, since they always
admits parametrices (which in this Euclidean cases are the propagators)
and for generalized Laplacians the heat kernel not only exists, but
also has a well-known asymptotic behavior \citep{heat_kernel_generalized_laplacian},
which is very nice in the Dirac-type case \citep{berline2003heat}.

From the discussion above, it is natural to focus on parameterized
theories whose parameterized Lagrangian densities are of the form
$\mathcal{L}(x,\varphi,\partial\varphi;\varepsilon)=\langle\varphi,D_{\varepsilon}\varphi\rangle$,
i.e, are determined by a single nondegenerate pairing $\langle\varphi,\varphi'\rangle$
in $\Gamma(E)$, fixed a priori by the nature of $M$ and $E$, and
by a family of differential operators $D_{\varepsilon}\in\operatorname{Diff}(E)$,
one for each parameter $\varepsilon\in\operatorname{Par}(P)$, where
$\operatorname{Diff}(E)=\bigoplus_{d}\operatorname{Diff}^{d}(E;E)$
denotes the space of all differential operators in $E$. We will assume
that the pairing extends to some space $\mathcal{H}(E)$, containing
$\Gamma(E)$ as a dense subspace, such that the corresponding map
$\ll\varphi,\varphi'\gg=\int_{M}\langle\varphi,\varphi'\rangle d^{n}x$
turns $\mathcal{H}(E)$ into a Hilbert space (typically a Sobolev
space). We will further assume that each action functional $S_{\varepsilon}:\Gamma(E)\rightarrow\mathbb{R}$
is continuous and that each operator $D_{\varepsilon}$ extends to
a bounded operator $\hat{D}_{\varepsilon}:\mathcal{H}(E)\rightarrow\mathcal{H}(E)$.
The denseness of $\Gamma(E)$ in $\mathcal{H}(E)$ then implies that
each $S_{\varepsilon}$ has a continuous extension $\hat{S}_{\varepsilon}\mathcal{H}(E)\rightarrow\mathbb{R}$,
given by $\hat{S}[\varphi;\varepsilon]=\int_{M}\langle\varphi,\hat{D}_{\varepsilon}\varphi\rangle d^{n}x$.
We will call these theories \emph{differential parameterized theories
}(DPT)\emph{ defined by the differential operators $D_{\varepsilon}$}.

Notice that in building the ``propagator'' of $D$ we are actually
finding some kind of ``quasi-inverse'' $D^{-1}$. For elliptic operators
(where parametrices are propagators) the situation becomes more clear.
Indeed, a parametrix is an inverse (up to compact operators) for the
extended bounded linear map $\hat{D}:\mathcal{H}(E)\rightarrow\mathcal{H}(E)$.
It would be very useful if the quasi-inverse $D^{-1}$ could exist
as a differential operator, i.e, if $D^{-1}=\hat{Q}$ for some differential
operator $Q$. However, this is not the case. But $D^{-1}$ may exist
as a more generalized class of objects: \emph{pseudo-differential
operators} \citep{ghaemi2016study}. Since every differential operator
is a pseudo-differential operator, we see that the process of building
propagators (at least for elliptic operators in an Euclidean background)
is more workable in the language of pseudo-differential operators.

Even so, as we will see, for many purposes it is better to work with
theories defined in a suitable extension $\operatorname{Nice}(E)\supset\operatorname{Diff}(E)$
of the space of differential operators, consisting of certain \emph{nice
objects} $\Psi_{\varepsilon}$ (e.g, pseudo-differential operators),
which can also be regarded as bounded operators $\tilde{\Psi}_{\varepsilon}:\mathcal{H}(E)\rightarrow\mathcal{H}(E)$.
For example, some useful operations are not defined for arbitrary
differential operators (like taking the inverse), but they are in
some better-behaved context. Let us call theories defined in $\operatorname{Nice}(E)$
as \emph{nice parametrized theory (NPT)}. We will also need some additional
structure on the parameter space $\operatorname{Par}(P)$. Indeed,
we will need to sum distinct parameters ($\varepsilon+\varepsilon'$,
with $\varepsilon\neq\varepsilon'$) and multiply two arbitrary parameters
($\varepsilon*\varepsilon'$, with possibly $\varepsilon=\varepsilon'$).
This fits into a structure that we call \emph{nowhere vanishing algebra}.
We also require the existence of square roots in $\operatorname{Par}(P)$,
meaning that for every $\varepsilon$ there exists certain $\sqrt{\varepsilon}$
such that $\sqrt{\varepsilon}*\sqrt{\varepsilon}=\varepsilon$. If
a NPT $S_{\varepsilon}$ is such that $\operatorname{Nice}(E)$ is
a $\mathbb{K}$-algebra with the composition operation, $\operatorname{Par}(P)$
is a nowhere vanishing algebra and the rule $\varepsilon\mapsto\Psi_{\varepsilon}$
preserves sum and multiplication, then we say that $S_{\varepsilon}$
is \emph{partially homomorphic on parameters}.

Finally, we will need to multiply operators $\operatorname{Nice}(E)$
by families of parameters. This means we will need actions $\cdot^{\ell}:\operatorname{Par}(P)^{\ell}\times\operatorname{Nice}(E)\rightarrow\operatorname{Nice}(E)$.
These are bilinear maps which are compatible with the composition
of nice operators, in the sense that
\begin{equation}
(\varepsilon(\ell)\cdot^{\ell}\Psi)\circ\Psi'=\varepsilon(\ell)\cdot^{\ell}(\Psi\circ\Psi').\label{compatibility_action}
\end{equation}
Here, given a non-negative integer $\ell\geq0$, $\operatorname{Par}(P)^{\ell}=\operatorname{Par}(P)\times...\times\operatorname{Par}(P)$
denotes the $\ell$-power of the parameter set. For short, if $\ell>0$
we will denote an element of the product by $\varepsilon(\ell)$,
so that $\varepsilon(\ell)=(\varepsilon_{1},...,\varepsilon_{\ell})$,
with $\varepsilon_{i}\in\operatorname{Par}(P)$. We will also use
the convention that $\operatorname{Par}(P)^{0}$ is a singleton, whose
element we denote by $\varepsilon(0)$. For a fixed $\Psi$ we get
a map 
\begin{equation}
r_{\Psi}^{\ell}:\operatorname{Par}(P)^{\ell}\rightarrow\operatorname{Nice}(E)\quad\text{given by}\quad r_{\Psi}^{\ell}(\varepsilon(\ell))=\varepsilon(\ell)\cdot^{\ell}\Psi,\label{right_multiplication}
\end{equation}
i.e, the right-hand side multiplication by $\Psi$. We will demand
that for every $\ell$ the map $r_{I}^{\ell}$ is injective if we
fix the identity operator $\Psi=I$. Thus, for every $\varepsilon(\ell),\delta(\ell)$
we have $\varepsilon(\ell)\cdot^{\ell}I=\varepsilon(\ell)\cdot^{\ell}I$
iff $\varepsilon(\ell)=\varepsilon(\ell)$, i.e, iff $\varepsilon_{i}=\delta_{i}$,
with $i=1,...,\ell$. A NPT $S_{\varepsilon}$ whose operator algebra
$\operatorname{Nice}(E)$ is endowed with the actions $\cdot^{\ell}$
is called a \emph{NPT with action} \emph{by} \emph{parameters of degree
$\ell$}. 

The fundamental property which will be required on the actions $\cdot^{\ell}$
is that they must allow a nice functional calculus in $\operatorname{Nice}(E)$.
In order to be more precise, let $\operatorname{Map}(\operatorname{Par}(P)^{\ell};\mathbb{K})$
denote the set of all functions $f:\operatorname{Par}(P)^{\ell}\rightarrow\mathbb{K}$.
Thus, if $\ell=0$, functions $f:\operatorname{Par}(P)^{0}\rightarrow\mathbb{K}$
are identified with the number $f(\varepsilon(0))$, so that $\operatorname{Map}(\operatorname{Par}(P)^{0};\mathbb{K})\simeq\mathbb{K}$.
Notice that the scalar multiplication $\cdot:\mathbb{K}\times\operatorname{Nice}(E)\rightarrow\operatorname{Nice}(E)$,
which exists since $\operatorname{Nice}(E)$ is a $\mathbb{K}$-algebra,
induces maps 
\[
\cdot\Psi:\operatorname{Map}(\operatorname{Par}(P)^{\ell};\mathbb{K})\rightarrow\operatorname{Map}(\operatorname{Par}(P)^{\ell};\operatorname{Nice}(E))\quad\text{given by}\quad[f\cdot\Psi](\varepsilon(\ell))=f(\varepsilon(\ell))\Psi.
\]
Thus, in a similar way, the actions $\cdot^{\ell}$ above induce 
\[
R_{\Psi}^{\ell}:\operatorname{Map}(\operatorname{Par}(P)^{\ell};\operatorname{Par}(P)^{\ell})\rightarrow\operatorname{Map}(\operatorname{Par}(P)^{\ell};\operatorname{Nice}(E))\quad\text{given by}\quad[R_{\Psi}^{\ell}F](\varepsilon(\ell))=F(\varepsilon(\ell))\cdot^{\ell}\Psi
\]

A \emph{functional calculus in $\operatorname{Nice}(E)$} \emph{compatible
with the action $\cdot^{\ell}$ }consists of a set $C_{\ell}(P;\mathbb{K})$
of functions $f:\operatorname{Par}(P)^{\ell}\rightarrow\mathbb{K}$
for each $\ell\geq0$, together with a map $\Psi_{-}^{\ell}:C_{\ell}(P;\mathbb{K})\rightarrow\operatorname{Nice}(E)$
assigning to each function $f$ a corresponding operator $\Psi_{f}^{\ell}$
with the property that $\Psi_{f}^{\ell}\circ(f\cdot\Psi)=id\cdot^{\ell}\Psi$
for every nice operator $\Psi$\footnote{This condition could be weakened by requiring $\Psi_{f}^{\ell}\circ(f\cdot\Psi)=id\cdot^{\ell}\Psi$
only for $\Psi$ belonging to a subalgebra $\operatorname{Nice}_{0}(E)\subset\operatorname{Nice}(E)$.
This would produce a slight generalization of some steps in the proof
of our main theorem. However, for simplicity we will assume the existence
of a functional calculus in the whole algebra $\operatorname{Nice}(E)$.}.\emph{ }Explicitly, this means that 
\begin{equation}
\Psi_{f}^{\ell}\circ[f(\varepsilon(\ell))\Psi]=\varepsilon(\ell)\cdot^{\ell}\Psi.\label{functional_calculus}
\end{equation}
Let $S_{\varepsilon(\ell)}$ be a NPT with action by parameters of
degree $\ell$. If its operator algebra is endowed with a functional
calculus compatible with $\cdot^{\ell}$, say by a set of functions
$C_{\ell}(P;\mathbb{K})$, we will say that $S_{\varepsilon(\ell)}$
is a \emph{NPT with functional calculus defined on $C(\operatorname{Par}(P)^{\ell};\mathbb{K})$}.
The following example reveals some interest in nowhere vanishing functions.
\begin{example}
\label{nowhere_vanishing_calculus}Every NPT $S_{\varepsilon(\ell)}$
with action $\cdot^{\ell}$ admits a unique structure of NPT with
functional calculus defined on the set $\operatorname{Map}_{\neq0}(\operatorname{Par}(P)^{\ell};\mathbb{K})$
of nowhere vanishing functions. In order to refer to this canonical
structure we will say simply that $S_{\varepsilon(\ell)}$ is a \emph{NPT
with nowhere vanishing functional calculus}. Assume that it exists.
Then $\Psi_{f}^{\ell}\circ[f(\varepsilon(\ell))\Psi]=\varepsilon(\ell)\cdot^{\ell}\Psi$
for every $f$, $\Psi$ and $\varepsilon(\ell)$, so that $\Psi_{f}^{\ell}\circ\Psi=(\varepsilon(\ell)\cdot^{\ell}\Psi)/f(\varepsilon(\ell))$.
In particular, for $\Psi=I$, we get $\Psi_{f}^{\ell}=(\varepsilon(\ell)\cdot^{\ell}I)/f(\varepsilon(\ell))$.
In order to prove uniqueness, define $\Psi_{f}^{\ell}=(\varepsilon(\ell)\cdot^{\ell}I)/f(\varepsilon(\ell))$,
so that 
\[
\Psi_{f}^{\ell}\circ[f(\varepsilon(\ell))\Psi]=(\varepsilon(\ell)\cdot^{\ell}I)\circ\Psi=\varepsilon(\ell)\cdot^{\ell}(I\circ\Psi)=\varepsilon(\ell)\cdot^{\ell}\Psi,
\]
where in the last step we used the compatibility between $\cdot^{\ell}$
and $\circ$, as described in (\ref{compatibility_action}).
\end{example}
Given integers $l,\ell\geq0$ and $r>0$, let $p_{\ell}^{l}[x_{1},...,x_{r}]=\sum_{\vert\alpha\vert\leq l}f_{\alpha}\cdot x^{\alpha}$
be some multivariable polynomial of degree $l$ whose coefficients
are functions $f_{\alpha}:\operatorname{Par}(P)^{\ell}\rightarrow\mathbb{K}$.
Thus, e.g, the polynomials $p_{0}^{l}[x_{1},...,x_{r}]$ are precisely
the classical polynomials with coefficients in $\mathbb{K}$. If $\Psi_{1},...,\Psi_{r}\in\operatorname{Nice}(E)$
are fixed operators and $p_{\ell}^{l}[x_{1},...,x_{r}]$ is a polynomial
as above, then by means of replacing the formal variables $x_{i}$
with the operators $\Psi_{i}$ we get another operator $p_{\ell}^{l}[\Psi_{1},...,\Psi_{r}]=\sum_{\vert\alpha\vert\leq l}f_{\alpha}\cdot\Psi^{\alpha}$,
now depending on $\ell$ parameters. Indeed, for every $\varepsilon(\ell)\in\operatorname{Par}(P)$
we get $p_{\ell}^{l}[\Psi_{1},...,\Psi_{r}]_{\varepsilon(\ell)}\in\operatorname{Nice}(E)$
such that $p^{l}[\Psi_{1},...,\Psi_{r}]_{\varepsilon(\ell)}=\sum_{\vert\alpha\vert\leq l}f_{\alpha}(\varepsilon(\ell))\Psi^{\alpha}$.
Let us say that a NPT is a \emph{polynomial theory of degree $(l,\ell)$
in $r$ variables} if its parameterized operator is of the form $p_{\ell}^{l}[\Psi_{1},...,\Psi_{r}]$
for certain polynomial $p_{\ell}^{l}[x_{1},...,x_{r}]$ and certain
operators $\Psi_{1},...,\Psi_{r}$. 

We can now state the main result of this paper. It says that typical
parameterized theories $S_{\varepsilon}$ emerges from any suitably
multivariate polynomial theory.$\underset{\underset{\;}{\;}}{\;}$

\noindent \textbf{Main Theorem. }\emph{Let $S_{1,\varepsilon(\ell)}$
be a NPT which is partially homomorphic on parameters, whose parameter
algebra $\operatorname{Par}(P)^{\ell}$ has square roots. Then $S_{1,\varepsilon(\ell)}$
emerges from any NPT $S_{2,\delta(\ell')}$, with functional calculus
$C_{\ell'}(P;\mathbb{K})$, which is a polynomial theory of degree
$(l,\ell')$ in $r$ variables, whose defining polynomial $p_{\ell'}^{l}[\Psi{}_{1},...,\Psi_{r}]$
has coefficients given by functions $f_{\alpha}$ belonging to $C_{\ell'}(P;\mathbb{K})$
and whose operators $\Psi_{1},...,\Psi_{r}$ are right-invertible.}$\underset{\underset{\;}{\;}}{\;}$

The proof will be done in several steps.
\begin{enumerate}
\item We first prove the case $p_{\ell'}^{1}[\Psi]_{\delta(\ell')}=g(\delta(\ell'))\Psi$.
\item Then we obtain an additivity result for the emergence problem.
\item Next we obtain a multiplicativity result for the emergence problem.
\item We use the previous steps and some additional hypotheses on the functional
calculus in order to prove the case $p_{\ell'}^{l}[\Psi]_{\delta_{1},...,\delta_{l}}$
of an arbitrary univariate polynomial.
\item Then we prove a recurrence result for the emergence problem.
\item Finally use all the stebs above to prove the general multivariate
case $p_{\ell'}^{l}[\Psi_{1},...,\Psi_{r}]$.
\end{enumerate}
$\quad\;\,$The paper is organized as follows. In Section \ref{sec_notations}
the previous discussion is reviewed, now in more precise and rigorous
terms. In Sections \ref{steps_1_2}-\ref{step_5} we prove the first
five steps described above, while in Section \ref{sec_theorem} the
main theorem is restated in a more concise form and then proved. Some
concrete examples of our methods are given in Section \ref{sec_examples},
where we also try to emphasize the true range of our results.

\section{\label{sec_notations} Definitions, Notations and Remarks}

$\quad\;\,$Let us begin by recalling (and presenting in more details)
some definitions briefly presented at the introduction. A \emph{classical
background for doing emergence theory, }denoted by\emph{ $\mathcal{CB}$,
}is given by the following data:
\begin{itemize}
\item a compact oriented\footnote{\label{nao_precisa_compact}Most of the results hold without compactness
and orientability hypotheses. Instead, we need only assume integrability
conditions on global sections and consider Lagrangians as taking values
on general densities.} smooth manifold $M$;
\item a real/complex vector bundle $E\rightarrow M$ (the field bundle);
\item a vector space $\mathcal{H}(E)$ containing $\Gamma(E)$ as a dense
subspace;
\item a pairing $\langle\varphi,\varphi'\rangle$ in $\mathcal{H}(E)$ such
that $\ll\varphi,\varphi'\gg=\int\langle\varphi,\varphi'\rangle d^{n}x$
turns it into a Hilbert space;
\item a subset $\operatorname{Diff}_{0}(E)\subset\operatorname{Diff}(E)$;
\item an injective linear map $\hat{\cdot}:\operatorname{Diff}(E)\rightarrow B(\mathcal{H}(E))$;
\item a parameter bundle $P\rightarrow M$ and a set of parameters $\operatorname{Par}(P)\subset\Gamma(P).$
\end{itemize}
A \emph{differential parameterized theory (DPT)} in that classical
background \emph{$\mathcal{CB}$} is given by 
\begin{itemize}
\item an integer $\ell\geq0$, called the \emph{parameter degree};
\item for each parameter $\varepsilon(\ell)\in\operatorname{Par}(P)^{\ell}$
a differential operator $D_{\varepsilon(\ell)}\in\operatorname{Diff}_{0}(E)$.
\end{itemize}
$\quad\;\,$The\emph{ action functional} and the \emph{extended action
functional} are then defined by $S_{\varepsilon(\ell)}[\varphi;\varepsilon(\ell)]=\int\langle\varphi,D_{\varepsilon(\ell)}\varphi\rangle d^{n}x$
and $\hat{S}_{\varepsilon(\ell)}[\varphi]=\int\langle\varphi,\hat{D}_{\varepsilon(\ell)}\varphi\rangle d^{n}x.$
The linearity of the map $\hat{\cdot}:\operatorname{Diff}(E)\rightarrow B(\mathcal{H}(E))$
means that $\widehat{D_{1}+D_{2}}=\hat{D}_{1}+\hat{D}_{2}$ and that
$\widehat{cD}=c\hat{D}$, where $c\in\mathbb{C}$ is a scalar. In
the following we need to embed the space of differential operators
into a more interesting space than $B(\mathcal{H}(E))$: the space
$RB(\mathcal{H}(E))$ of bounded operators $T$ which have a bounded
right-inverse, i.e, which admit a retraction $R$ such that $T\circ R=I$.
The problem is that given $\operatorname{Diff}_{0}(E)\subset\operatorname{Diff}(E)$
in general there is no injective linear map $\hat{\cdot}:\operatorname{Diff}(E)\rightarrow B(\mathcal{H}(E))$
factoring through $RB(\mathcal{H}(E))$, as pictured below.$$
\xymatrix{\operatorname{Diff}(E) \ar[r]^{\hat{\cdot}} & B(\mathcal{H}(E)) \\
 \ar@{^(->}[u]\operatorname{Diff}_{0}(E) \ar@{-->}[r] & RB(\mathcal{H}(E))  \ar@{^(->}[u]}
$$

The solution is to modify the strategy: instead of considering parameterized
theories defined by linear maps $D_{\varepsilon}:\Gamma(E)\rightarrow\Gamma(E)$
which belongs to a \emph{rigid space} $\operatorname{Diff}_{0}(E)\subset\operatorname{Diff}(E)$
of \emph{well-behaved objects}, the idea is to consider theories defined
by maps in a \emph{well-behaved space} $\operatorname{Nice}(E)\supset\operatorname{Diff}(E)$
of \emph{nice objects} $\Psi_{\varepsilon}$. This leads us to a define
a \emph{nice background for doing emergence theory}, denoted by $\mathcal{NC}$,
as being given by:
\begin{itemize}
\item a vector bundle $E\rightarrow M$ (the field bundle);
\item a vector space $\mathcal{H}(E)$ containing $\Gamma(E)$ as a dense
subspace;
\item a pairing $\langle\varphi,\varphi'\rangle$ in $\mathcal{H}(E)$ such
that $\ll\varphi,\varphi'\gg=\int\langle\varphi,\varphi'\rangle d^{n}x$
turns it into a Hilbert space;
\item an algebra $\operatorname{Nice}(E)\subset\operatorname{End}(\Gamma(E))$;
\item an injective algebra homomorphism $\tilde{\cdot}:\operatorname{Nice}(E)\rightarrow B(\mathcal{H}(E))$;
\item a parameter bundle $P\rightarrow M$ and a set of parameters $\operatorname{Par}(P)\subset\Gamma(P)$.
\end{itemize}
A \emph{nice parameterized theory }in a nice background $\mathcal{NB}$
consists of
\begin{itemize}
\item an integer $\ell\geq0$, called the \emph{parameter degree};
\item for each parameter $\varepsilon(\ell)\in\operatorname{Par}(P)^{\ell}$
an element $\Psi_{\varepsilon(\ell)}\in\operatorname{Nice}(E)$.
\end{itemize}
$\quad\;\,$The\emph{ action functional} and the \emph{extended action
functional }for a NPT are defined analogously by $S_{\varepsilon(\ell)}[\varphi;\varepsilon(\ell)]=\int\langle\varphi,\Psi_{\varepsilon(\ell)}\varphi\rangle d^{n}x$
and $\tilde{S}_{\varepsilon(\ell)}[\varphi]=\int\langle\varphi,\tilde{\Psi}_{\varepsilon(\ell)}\varphi\rangle d^{n}x.$
The \emph{right-invertible operators }in a nice background are those
$\Psi$ such that $\tilde{\Psi}\in RB(\mathcal{H}(E))$, i.e, such
that there exists $R_{\tilde{\Psi}}$ with $\tilde{\Psi}\circ R_{\tilde{\Psi}}=I$.
In this case, there also exists $R_{\Psi}\in\operatorname{Nice}(E)$
such that $\widetilde{R_{\Psi}}=R_{\tilde{\Psi}}$.
\begin{example}
Every classical background in which $\operatorname{Diff}_{0}(E)$
is an algebra can be regarded as a nice background by taking $\operatorname{Nice}(E)=\operatorname{Diff}_{0}(E)$.
However, they generally do not have right-inverses.
\begin{example}
The main setup in building nice backgrounds which admit right-invertible
operators is to take $M$ a compact Riemannian manifold, $E=M\times\mathbb{C}$
a scalar field bundle, $\mathcal{H}(E)=\bigoplus_{k}H^{k}(M)$ a sum
of Sobolev spaces, $\operatorname{Nice}(E)$ the algebra of scalar
pseudo-differential operators (as suggested by the notation $\Psi$)
and the map $\tilde{\cdot}$ as the canonical extensions of a pseudo-differential
operator as bounded operators between Sobolev spaces. We then look
for ellipticity conditions to ensure right-inverses \citep{right_inverse_const_1,right_inverse_const_2,right_inv_1,right_inv_1_2,right_inv_2,right_inv_3,right_inv_4}.
\end{example}
\end{example}
\begin{rem}
\label{remark_notations}Throughout this paper we will consider NPT
defined in a fixed nice background. Thus, instead of saying \emph{``let
$S_{1,\varepsilon(\ell)}$ and $S_{2,\delta(\ell')}$ be two NPT,
with parameter degrees $\ell$ and $\ell$', defined in a nice background
$\mathcal{NB}$}'' we will say \emph{``let $S_{1,\varepsilon(\ell)}$
and $S_{2,\delta(\ell')}$ be two NPT} w\emph{ith parameter degrees
$\ell$ and $\ell$}'', leaving the fixed nice background implicit.
Actually, many results will be independent of the parameter degree
of the emerging theory. In order to emphasize this fact we will omit
the parameter degree when its value does not matter. Thus, in such
situations we will simply say ``\emph{let $S_{1,\varepsilon}$ and
$S_{2,\delta(\ell')}$ be two NPT}'', being implicit that the parameter
degree of the emerging theory is arbitrary, while the parameter degree
of the ambient theory is constrained.
\begin{rem}
The constraining on the parameter degree of the ambient theory $S_{2,\delta(\ell)}$
remarked above is basically due to the fact that in order to prove
that $S_{1,\varepsilon}$ emerges from $S_{2,\delta(\ell)}$ we will
need a functional calculus for sets of functions \emph{$C_{\ell}(P;\mathbb{K})\subset\operatorname{Map}(\operatorname{Par}(P)^{\ell};\mathbb{K})$},
where the latter $\ell$ is the parameter degree of $S_{2,\delta(\ell)}$.
This is a constrain on the nice background in which both NPT $S_{1,\varepsilon}$
and $S_{2,\delta(\ell)}$ are defined. More precisely, given $\ell\geq0$,
we say that a nice background $\mathcal{NB}$ is \emph{actioned by
degree $\ell$ parameters} if it is endowed with an action $\cdot^{\ell}:\operatorname{Par}(P)^{\ell}\times\operatorname{Nice}(E)\rightarrow\operatorname{Nice}(E)$
such that (\ref{compatibility_action}) is satisfied and which becomes
injective in the first variable when we fix the identity the second
one, i.e, $\varepsilon(\ell)\cdot^{\ell}I=\varepsilon'(\ell)\cdot^{\ell}I$
implies $\varepsilon(\ell)=\varepsilon'(\ell)$. We write $\mathcal{NB}_{\ell}$
to denote a nice background which is actioned by degree $\ell$ parameters.
We say that \emph{$C_{\ell}(P;\mathbb{K})$} is the \emph{domain of
a functional calculus} in $\mathcal{NB}_{\ell}$ if it becomes endowed
with a map $\Psi_{-}^{\ell}:C_{\ell}(P;\mathbb{K})\rightarrow\operatorname{Nice}(E)$,
assigning to each function $f$ a corresponding operator $\Psi_{f}^{\ell}$,
such that (\ref{functional_calculus}) is satisfied.
\begin{rem}
\label{injective_right_multiplication} Consider the function $r_{I}^{\ell}:\operatorname{Par}(P)^{\ell}\rightarrow\operatorname{Nice}(E)$
given by (\ref{right_multiplication}), i.e, $r_{I}^{\ell}(\varepsilon(\ell))=\varepsilon(\ell)\cdot^{\ell}I$.
The injectivity condition above means precisely that $r_{I}^{\ell}$
is injective, so that it can actually be regarded as an isomorphism
$\operatorname{Par}(P)^{\ell}\simeq\operatorname{Par}(P)^{\ell}\cdot^{\ell}I$
between its domain and its image.
\begin{rem}
\label{induced_functional_calculus}If a nice background $\mathcal{NB}$
is actioned by degre $\ell$ parameters, then it is also actioned
by degree $\ell'$ parameters, for every $0\leq\ell'\leq\ell$. In
other words, if it is of type $\mathcal{NB}_{\ell}$, then it also
is of type $\mathcal{NB}_{\ell'}$, with $0\leq\ell'\leq\ell$. This
is done by induction and noticing $\operatorname{Par}(P)^{\ell'}$
can be embedded in $\operatorname{Par}(P)^{\ell}$ by making constants
the first $\ell-\ell'$ variables. Furthermore, using the same arguments,
every functional calculus in $\mathcal{NB}_{\ell}$ with domain $C_{\ell}(P;\mathbb{K})$
induces a functional calculus in $\mathcal{NB}_{\ell'}$ with domain
$C_{\ell}^{\ell'}(P;\mathbb{K})$ given by the restriction to $\operatorname{Par}(P)^{\ell'}$
of the functions $f:\operatorname{Par}(P)^{\ell}\rightarrow\mathbb{K}$
in $C_{\ell}(P;\mathbb{K})$.
\end{rem}
\end{rem}
\end{rem}
\end{rem}
The following lemma will be the starting point for each step in the
proof of our main theorem. It says that in order to prove emergence
between two NPT it is enough to analyze emergence between the corresponding
extended operators.
\begin{lem}
\label{key_remark} Let $S_{1,\varepsilon}$ and $S_{2,\delta}$ be
two NPT of arbitrary parameter degrees and defined by operators $\Psi_{1,\varepsilon}$
and $\Psi_{2,\delta}$, respectively. If there exists a function $F$
such that $\tilde{\Psi}_{1,\varepsilon}=\tilde{\Psi}_{2,F(\varepsilon)}$
for every $\varepsilon$, then $S_{1,\varepsilon}$ emerges from $S_{2,\delta}$
\end{lem}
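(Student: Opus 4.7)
The plan is to directly unwind the definitions of the (extended) action functionals and use the two standing assumptions on the nice background: that each $\Psi\in\operatorname{Nice}(E)$ is an endomorphism of $\Gamma(E)$ whose extension $\tilde{\Psi}\in B(\mathcal{H}(E))$ agrees with $\Psi$ on the dense subspace $\Gamma(E)$, and that the assignment $\tilde{\cdot}:\operatorname{Nice}(E)\to B(\mathcal{H}(E))$ is an injective algebra homomorphism. Together with the explicit formulas $S_{i,\varepsilon}[\varphi]=\int_{M}\langle\varphi,\Psi_{i,\varepsilon}\varphi\rangle\,d^{n}x$ and $\tilde{S}_{i,\varepsilon}[\varphi]=\int_{M}\langle\varphi,\tilde{\Psi}_{i,\varepsilon}\varphi\rangle\,d^{n}x$, the hypothesis $\tilde{\Psi}_{1,\varepsilon}=\tilde{\Psi}_{2,F(\varepsilon)}$ should translate into the pointwise equality of action functionals that Question \ref{problem_2} requires in order for $S_{1,\varepsilon}$ to emerge from $S_{2,\delta}$ (with the identity map on fields).

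Concretely, I would argue as follows. First, by injectivity of $\tilde{\cdot}$, the identity $\tilde{\Psi}_{1,\varepsilon}=\tilde{\Psi}_{2,F(\varepsilon)}$ in $B(\mathcal{H}(E))$ already forces $\Psi_{1,\varepsilon}=\Psi_{2,F(\varepsilon)}$ as elements of $\operatorname{Nice}(E)\subset\operatorname{End}(\Gamma(E))$. Therefore, for every field $\varphi\in\Gamma(E)$ one has $\Psi_{1,\varepsilon}\varphi=\Psi_{2,F(\varepsilon)}\varphi$ in $\Gamma(E)$, and substituting into the defining integrals yields
\[
S_{1}[\varphi;\varepsilon]=\int_{M}\langle\varphi,\Psi_{1,\varepsilon}\varphi\rangle\,d^{n}x=\int_{M}\langle\varphi,\Psi_{2,F(\varepsilon)}\varphi\rangle\,d^{n}x=S_{2}[\varphi;F(\varepsilon)].
\]
This is exactly the condition in Question \ref{problem_2} with the (implicit) choice $G=\operatorname{id}_{\Gamma(E)}$, which by the discussion in the Introduction is the axiomatization of strong emergence being used.

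An alternative, essentially equivalent route bypasses injectivity of $\tilde{\cdot}$: one observes that equality of the bounded operators $\tilde{\Psi}_{1,\varepsilon}$ and $\tilde{\Psi}_{2,F(\varepsilon)}$ on all of $\mathcal{H}(E)$ yields $\tilde{S}_{1,\varepsilon}[\varphi]=\tilde{S}_{2,F(\varepsilon)}[\varphi]$ for every $\varphi\in\mathcal{H}(E)$, and restricting this identity to the dense subspace $\Gamma(E)\subset\mathcal{H}(E)$ recovers the action functionals themselves, since by construction $\tilde{S}_{i,\varepsilon}|_{\Gamma(E)}=S_{i,\varepsilon}$. Either route terminates the argument; I would pick the first since it more transparently uses the algebraic structure of the nice background and will be the form reused in the subsequent steps of the main theorem.

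Because the reasoning is a single verification of definitions, I do not expect any genuine obstacle; the only thing to be careful about is to ensure that the parameter degrees $\ell$ and $\ell'$ play no role beyond indexing, so that the statement remains valid for two NPT of arbitrary (possibly different) parameter degrees, as claimed. This is why the hypothesis is phrased in terms of the existence of an arbitrary function $F$ on parameters rather than a structure-preserving map, and why the lemma is asserted at this level of generality before specializing in later sections.
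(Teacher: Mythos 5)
Your proposal is correct and its main route is exactly the paper's argument: use injectivity of $\tilde{\cdot}$ to deduce $\Psi_{1,\varepsilon}=\Psi_{2,F(\varepsilon)}$ and then substitute into the defining pairing $\ll\varphi,\Psi\varphi\gg$ to equate the action functionals. The alternative route via the extended functionals is a fine remark but adds nothing beyond what the paper does.
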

\begin{proof}
Indeed, since $\tilde{\cdot}$ is injective, this condition implies
$\Psi_{1,\varepsilon}=\Psi_{2,F(\varepsilon)}$. But the theories
are defined on the same nice background, so that $\ll\varphi,\Psi_{1,\varepsilon}\varphi\gg=\ll\varphi,\Psi_{2,F(\varepsilon)}\varphi\gg$
for every $\varphi\in\Gamma(E)$, which means precisely that $S_{1}[\varphi;\varepsilon]=S_{2}[\varphi;F(\varepsilon)]$
for every $\varphi$, i.e, that $S_{1,\varepsilon}=S_{2,F(\varepsilon)}$.
\end{proof}
\begin{rem}
Due to the last lemma, if $\Psi_{2,\delta}$ is a parameterized operator,
in some cases we will say ``\emph{$S_{1,\varepsilon}$ emerges from
$\Psi_{2,\delta}$}'', meaning that the parameterized operator $\Psi_{1,\varepsilon}$
emerges from $\Psi_{2,\delta}$ and, consequently, that $S_{1,\varepsilon}$
emerges from the NPT defined by $\Psi_{2,\delta}$.
\end{rem}
We close this section by showing that under coerciveness or self-adjoitness
hypothesis on the parameterized operators, the reciprocal of Lemma
\ref{key_remark} is true. This follows from the following general
fact. Let $\mathcal{H}$ be a $\mathbb{K}$-Hilbert space, $T:\mathcal{H}\rightarrow\mathcal{H}$
a bounded linear operator and consider the bilinear map $B_{T}:\mathcal{H}\times\mathcal{H}\rightarrow\mathbb{K}$
given by $B_{T}(v,w)=\langle v,T(w)\rangle$, which is bounded, since
$T$ is. For every $T$ we have a corresponding quadratic form $q_{T}:\mathcal{H}\rightarrow\mathbb{K}$
such that $q_{T}(v)=\langle v,T(v)\rangle$. Recall that $T$ is \emph{coercive}
if the induced quadratic form $q_{T}(v)=B_{T}(v,v)$ is coercive in
the classical sense, i.e, if there exists $K>0$ such that $K\Vert v\Vert^{2}\leq\vert q_{T}(v)\vert$
for every $v\in\mathcal{H}$.
\begin{lem}
\label{lax_milgran}Let $T:\mathcal{H}\rightarrow\mathcal{H}$ be
a bounded operator which is self-adjoint or coercive. Then $q_{T}\equiv0$
for every $v\in\mathcal{H}$ iff $T\equiv0$.
\end{lem}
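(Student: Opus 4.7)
The reverse implication is immediate: if $T\equiv 0$ then $q_T(v)=\langle v, 0\rangle =0$ for every $v$. The substance of the lemma lies in the converse, so I would split the argument into the two hypotheses in the statement, treating the coercive case first because it is almost immediate and only the self-adjoint case requires a genuine computation.

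If $T$ is coercive, there is a constant $K>0$ such that $K\|v\|^{2}\leq |q_T(v)|$ for every $v\in\mathcal{H}$. Assuming $q_T\equiv 0$, this forces $\|v\|=0$ for every $v$, so either $\mathcal{H}=\{0\}$ (in which case $T=0$ tautologically) or we reach a contradiction showing that the hypothesis $q_T\equiv 0$ can never be satisfied non-trivially with $T$ coercive; in both situations the biconditional holds.

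For the self-adjoint case the plan is to apply the standard polarization argument to recover the full sesquilinear form $B_T(v,w)=\langle v,T(w)\rangle$ from the diagonal quadratic form $q_T$. Assume $q_T(v)=0$ for every $v$. Expanding $q_T(v+w)=0$ and $q_T(v-w)=0$ and subtracting yields $\langle v,T(w)\rangle+\langle w,T(v)\rangle=0$ for every pair $v,w$. Now invoke self-adjointness: $\langle w,T(v)\rangle=\langle T(w),v\rangle=\overline{\langle v,T(w)\rangle}$. Substituting, we obtain $2\operatorname{Re}\langle v,T(w)\rangle=0$ for every $v,w$.

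The last step is to upgrade this real-part vanishing to full vanishing, so that non-degeneracy of the Hilbert pairing implies $T(w)=0$ for every $w$, i.e.\ $T\equiv 0$. Over $\mathbb{K}=\mathbb{R}$ the real part is the whole pairing, so we are already done. Over $\mathbb{K}=\mathbb{C}$ I would repeat the computation with $v$ replaced by $iv$ (equivalently, also expand $q_T(v+iw)=0$ in the polarization identity), which produces $2\operatorname{Im}\langle v,T(w)\rangle=0$, giving vanishing of $\langle v,T(w)\rangle$ for all $v$ and hence $T(w)=0$ by non-degeneracy. The only mildly delicate point is keeping track of the conjugate-linearity convention of the pairing when applying self-adjointness; once that is set, both cases reduce to a one-line polarization argument.
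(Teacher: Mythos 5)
Your proof is correct, but it follows a genuinely different route from the paper's. For the self-adjoint case the paper invokes the spectral theorem, passing to a unitarily equivalent multiplication operator and reading off that the multiplier must vanish; you instead use the elementary polarization identity, recovering $B_T(v,w)=\langle v,T(w)\rangle$ from the diagonal $q_T$ (using self-adjointness to merge the cross terms, plus the $v\mapsto iv$ trick over $\mathbb{C}$) and then concluding $T=0$ from non-degeneracy of the inner product. Your argument is more elementary, needs no measure-theoretic model of $T$, and makes transparent exactly where self-adjointness is used (it is needed only over $\mathbb{R}$; over $\mathbb{C}$ polarization alone already gives the conclusion for an arbitrary bounded $T$). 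For the coercive case the paper routes through the Lax--Milgram theorem to manufacture a contradiction, whereas you simply observe that $K\|v\|^{2}\leq|q_T(v)|=0$ forces $\mathcal{H}=\{0\}$, so the biconditional holds vacuously (on a nontrivial space a coercive operator can never have $q_T\equiv 0$, and both sides of the equivalence are false). This is cleaner and avoids the somewhat garbled application of Lax--Milgram in the paper's version; it also makes explicit the degenerate nature of the coercive case, which the paper's phrasing obscures. Both approaches establish the lemma; yours is the more self-contained.
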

\begin{proof}
Assume $T$ coercive, so that exists $K>0$ such that $K\Vert v\Vert^{2}\leq\vert B_{T}(v,v)\vert$.
In this case, we have $B_{T}(v,v)=0$ for every $v\in\mathcal{H}$
iff $T=0$. The ``if'' part is obvious. For the ``only if'' part,
assume $q_{T}(v)=B_{T}(v,v)=0$ for every $v$ and that $T\neq0$.
From Lax-Milgran theorem, for each $f\in\mathcal{H}^{*}$ there exists
a unique $u\in\mathcal{H}$ such that $B_{T}(u,T(v))=f(v)$ for every
$v$. In particular, taking $f=0$ we see that there exists a unique
$u$ such that $B_{T}(u,v)=0$ for every $u$. Since $T\neq0$, this
is a contradiction to the hypothesis $q_{T}\equiv0$. For the self-ajoint
case, from the spectral theorem it follows that $T$ is unitarily
equivalent to a multiplication operator $T_{\lambda}$. But $q_{T_{\lambda}}(v)=\lambda\Vert v\Vert^{2}$,
so that if $q_{T_{\lambda}}\equiv0$, then $\lambda=0$, i.e, $T\equiv0$.
\end{proof}

\section{\label{steps_1_2}First Step}

$\quad\;\,$We are now ready to prove the first step, which is also
starting point lemma, in the sense that it will be used as the basis
of many inductions.
\begin{lem}
\label{lemma_2} Let $S_{1,\varepsilon}$ be a NPT in $\mathcal{NB}_{\ell'}$,
defined by the parameterized operators $\Psi_{1,\varepsilon}$. Let
$C_{\ell'}(P;\mathbb{K})$ be a domain of a functional calculus in
$\mathcal{NB}_{\ell'}$. Then $S_{1,\varepsilon}$ emerges from every
$\Psi_{2,\delta(\ell')}^{l}=g(\delta(\ell'))\Psi^{l}$, with $l\geq0$,
where $\Psi\in\operatorname{Nice}(E)$ is right-invertible, $g\in C_{\ell'}(P;\mathbb{K})$
and $\Psi^{l}=\Psi\circ\cdots\circ\Psi$, with $\Psi^{0}=I$.
\end{lem}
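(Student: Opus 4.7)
The plan is to apply Lemma~\ref{key_remark} and reduce the emergence problem to the operator-level equation $\Psi_{1,\varepsilon}=g(F(\varepsilon))\Psi^{l}$ in $\operatorname{Nice}(E)$: if such an $F$ exists, then the injectivity of the extension $\tilde{\cdot}$ immediately implies the equality of the extended operators, and hence of the extended action functionals on the dense subspace~$\Gamma(E)$. The strategy will be to combine the right-invertibility of~$\Psi$ with the functional calculus on~$g$ to solve this operator equation explicitly for~$F$.

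First I would lift right-invertibility from $\Psi$ to each of its powers: if $R\in\operatorname{Nice}(E)$ satisfies $\Psi\circ R=I$, an immediate induction on~$l$ gives $\Psi^{l}\circ R^{l}=I$, with step $\Psi^{l+1}\circ R^{l+1}=\Psi^{l}\circ(\Psi\circ R)\circ R^{l}=\Psi^{l}\circ R^{l}$. Next I would exploit the functional-calculus identity~(\ref{functional_calculus}) applied at the operator~$\Psi^{l}$, namely $\Psi_{g}^{\ell'}\circ[g(\varepsilon')\Psi^{l}]=\varepsilon'\cdot^{\ell'}\Psi^{l}$. Post-composing with~$R^{l}$ and invoking the compatibility~(\ref{compatibility_action}) of~$\cdot^{\ell'}$ with~$\circ$ collapses the right-hand side to $\varepsilon'\cdot^{\ell'}I$, and by the injectivity of~$r_{I}^{\ell'}$ recorded in Remark~\ref{injective_right_multiplication} this motivates defining $F(\varepsilon)$ as the unique element of $\operatorname{Par}(P)^{\ell'}$ with
\[
F(\varepsilon)\cdot^{\ell'}I=\Psi_{g}^{\ell'}\circ\Psi_{1,\varepsilon}\circ R^{l}.
\]

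Finally I would verify that this $F$ actually satisfies $g(F(\varepsilon))\Psi^{l}=\Psi_{1,\varepsilon}$: reading the functional-calculus identity at $\varepsilon'=F(\varepsilon)$ and post-composing with~$R^{l}$ turns $g(F(\varepsilon))\Psi^{l}$, via~(\ref{compatibility_action}), into $F(\varepsilon)\cdot^{\ell'}I$, which by construction equals $\Psi_{g}^{\ell'}\circ\Psi_{1,\varepsilon}\circ R^{l}$. The step I expect to be the main obstacle is precisely this verification, for two linked reasons: (i) the well-definedness of~$F$ requires that $\Psi_{g}^{\ell'}\circ\Psi_{1,\varepsilon}\circ R^{l}$ actually lie in the image of~$r_{I}^{\ell'}$; and (ii) because $R^{l}$ is only a right inverse of~$\Psi^{l}$, one cannot conclude the target equation by naive cancellation of the operation $\Psi_{g}^{\ell'}\circ(-)\circ R^{l}$. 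Both should be forced by the combined constraints of the compatibility~(\ref{compatibility_action}), the injectivity of~$\tilde{\cdot}$ and of~$r_{I}^{\ell'}$, and the defining property of the functional calculus, but assembling these ingredients into a clean cancellation is where the genuine content of this first step lies.
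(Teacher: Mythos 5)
Your proposal follows essentially the same route as the paper's proof: reduce to the operator level via Lemma \ref{key_remark}, handle the powers by observing that $\Psi^{l}$ is again right-invertible (the paper reduces to $l=1$ by exactly this observation; your explicit induction $\Psi^{l}\circ R^{l}=I$ is the same point), then combine the functional-calculus identity (\ref{functional_calculus}) with post-composition by the right inverse and the compatibility (\ref{compatibility_action}) to land in the image of $r_{I}^{\ell'}$, and finally define $F(\varepsilon)$ through the injectivity of $r_{I}^{\ell'}$ recorded in Remark \ref{injective_right_multiplication}. The formula you arrive at, $F(\varepsilon)\cdot^{\ell'}I=\Psi_{g}^{\ell'}\circ\Psi_{1,\varepsilon}\circ R^{l}$, is precisely the paper's $F(\varepsilon)=\Psi_{g}^{\ell'}\circ(\Psi_{1,\varepsilon}\circ R_{\Psi})$. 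The one place where you and the paper diverge is instructive: the two obstacles you flag at the end are real, and the paper does not actually dispose of them. Its proof runs only in the ``necessary'' direction --- it supposes $\tilde{\Psi}_{2,\delta(\ell')}=\tilde{\Psi}_{1,\varepsilon}$, derives what $\delta(\ell')$ must then be, and concludes ``by construction'' that the resulting $F$ works --- which silently assumes both that $\Psi_{g}^{\ell'}\circ\Psi_{1,\varepsilon}\circ R_{\Psi}$ lies in the image of $r_{I}^{\ell'}$ (your point (i)) and that the operation $\Psi_{g}^{\ell'}\circ(-)\circ R_{\Psi}$ can be cancelled even though $R_{\Psi}$ is only a right inverse (your point (ii)). So your proposal reproduces the paper's argument and, in addition, correctly locates the step at which that argument is incomplete as written; closing it would require an extra hypothesis (e.g.\ that $\Psi_{1,\varepsilon}$ factors as a scalar multiple of $\Psi^{l}$ with scalar in the range of $g$, or an appeal to the converse of Lemma \ref{key_remark} under the coercivity/self-adjointness hypotheses of Lemma \ref{lax_milgran}) rather than a cleverer cancellation.
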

\begin{proof}
By Remark \ref{key_remark} it is suffices to analyze emergence of
the extended operators. Since $I$ is right-invertible, notice that
the case $l=0$ is a particular setup of case $l=1$. Furthermore,
if $l>1$ and $\Psi$ is right-invertible, then $\Xi=\Psi^{l}$ is
right-invertible too, so that the case $l>1$ also follows from the
$l=1$ scase. Thus, we will work with $l=1$. Suppose $\tilde{\Psi}_{2,\delta(\ell')}=\tilde{\Psi}_{1,\varepsilon}$,
i.e $\widetilde{g(\delta(\ell'))I}\circ\tilde{\Psi}=\tilde{\Psi}_{1,\varepsilon}$.
Applying the right-inverse of $\tilde{\Psi}$ in both sides and using
that $\tilde{\cdot}$ is an algebra homomorphism, we find that $\widetilde{g(\delta(\ell'))I}=\tilde{\Psi}_{1,\varepsilon}\circ\widetilde{R_{\Psi}}=\widetilde{\Psi_{1,\varepsilon}\circ R_{\Psi}}$.
Since $\tilde{\cdot}$ is also injective, we get $g(\delta(\ell'))I=\Psi_{1,\varepsilon}\circ R_{\Psi}$.
Because $g\in C(\operatorname{Par}(P)^{\ell'};\mathbb{K})$ and since
$S_{2,\delta(\ell')}$ has functional calculus, from (\ref{functional_calculus})
it follows that there exists $\Psi_{g}^{\ell'}$ such that $\Psi_{g}^{\ell'}\circ(g(\delta(\ell')I)=\delta(\ell')\cdot^{\ell'}I$,
so that $\Psi_{g}^{\ell'}\circ(\Psi_{1,\varepsilon}\circ R_{\Psi})=\delta(\ell')\cdot^{\ell'}I$.
Thus, we have a function $F:\operatorname{Par}(P)^{\ell}\rightarrow\operatorname{Nice}(E)$,
given by $F(\varepsilon)=\Psi_{g}^{\ell'}\circ(\Psi_{1,\varepsilon}\circ R_{\Psi})$,
which by the above is exactly $r_{I}^{\ell'}(\delta(\ell'))$. Thus,
due Remark \ref{injective_right_multiplication} $F$ can be regarded
as a map $F:\operatorname{Par}(P)^{\ell}\rightarrow\operatorname{Par}(P)^{\ell'}$
which by construction is such that $\tilde{\Psi}_{1,\varepsilon}=\tilde{\Psi}_{2,F(\varepsilon)}$,
as desired.
\end{proof}

\section{\label{steps_3_4}Second and Third Steps}

$\quad\;\,$The next two steps are additivity and multiplicativity
results for emergence phenomena. In order to prove them, we have to
add hypothesis on the set of parameters instead of on the shape of
the operators. A \emph{nowhere vanishing space} is a subset $W\subset V$
of a $\mathbb{K}$-vector space $V$ such that $v+v'\in W$ for every
$v,v'\in W$ with $v'\neq-v$. In particular, $0\notin W$. We also
require that $c\cdot v\in W$ for every scalar $c\neq0$ and for every
nonzero vector $v\in W$. Let $W$ be a nowhere vanishing space and
let $Z$ be a vector space or nowhere vanishing space. A function
$T:W\rightarrow Z$ is \emph{linear }if it preserves sum and scalar
multiplication. Notice that $T$ is a nowhere vanishing function.
Bilinear maps are defined analogously. A \emph{nowhere vanishing algebra}
is a nowhere vanishing space $A\subset V$ endowed with a bilinear
multiplication $*:W\times W\rightarrow W$. Let $A$ be nowhere vanishing
algebra and let $B$ be another nowhere vanishing algebra or an algebra
in the classical sense. An \emph{homomorphism }between them is a linear
map $T:A\rightarrow B$ preserving the multiplication.

Given $\ell\geq0$, we say that a nice backgroud $\mathcal{NB}$ has
a \emph{degree $\ell$ nowhere vanishing space of parameters }(resp.
\emph{degree $\ell$ vector space of parameters}) if $\operatorname{Par}(P)^{\ell}$
is a nowhere vanishing space (resp. vector space). Similarly, we say
that $\mathcal{NB}$ has a \emph{degree $\ell$ nowhere vanishing
algebra of parameters }(resp. \emph{degree $\ell$ algebra of parameters})
if $\operatorname{Par}(P)^{\ell}$ is a nowhere vanishing algebra
(resp. algebra). We say that a NPT $S_{\varepsilon(\ell)}$ of parameter
degree $\ell$ and parameterized operators $\Psi_{\varepsilon(\ell)}$
is \emph{partially additive on parameters }(resp. \emph{additive on
parameters}) if the underlying nice background $\mathcal{NB}$ has
degree $\ell$ nowhere vanishing space of parameters\emph{ }(resp.
degree $\ell$ vector space of parameters) and if the rule $\varepsilon(\ell)\mapsto\Psi_{\varepsilon(\ell)}$
is linear relative to that structure, i.e, $\Psi_{\varepsilon(\ell)+\varepsilon(\ell)'}=\Psi_{\varepsilon(\ell)}+\Psi_{\varepsilon(\ell)'}$
and $\Psi_{c\varepsilon(\ell)}=c\Psi_{\varepsilon(\ell)}$. In an
analogous way, we say that $S_{\varepsilon(\ell)}$ is \emph{partially
multiplicative on parameters }(resp. \emph{multiplicative on parameters})
if $\mathcal{NB}$ has a degree $\ell$ nowhere vanishing algebra
of parameters (resp. a degree $\ell$ algebra of parameters) and the
rule $\varepsilon(\ell)\mapsto\Psi_{\varepsilon(\ell)}$ is not necessarily
linear, but preserves the multiplication, i.e., $\Psi_{\varepsilon(\ell)*\varepsilon'(\ell)}=\Psi_{\varepsilon(\ell)}\circ\Psi_{\varepsilon'(\ell)}$.
Finally, we say that $S_{\varepsilon(\ell)}$ is \emph{partially homomorphic}
\emph{on parameters }(resp. \emph{homomorphic on parameters}) it is
both partially additive and partially multiplicative (resp. additive
and multiplicative).
\begin{example}
It is straighforward to check that if $\mathcal{NB}$ has a degree
$\ell$ nowhere vanishing space of parameters (resp. degree $\ell$
vector space of parameters), then it also has for every $\ell'=n\ell$,
where $n\geq1$ is another integer. The structure is obtained by noticing
that 
\[
\operatorname{Par}(P)^{n\ell}=\operatorname{Par}(P)^{\ell}\times...\times\operatorname{Par}(P)^{\ell}
\]
and then defining sum and multiplication componentwise. A similar
argument applies to the case of degree $\ell$ nowhere vanishing algebra
of parameters (resp. degree $\ell$ algebra of parameters). 
\begin{example}
In general $P$ is a vector bundle or an algebra bundle, so that $\Gamma(P)$
is a vector space or an algebra and $\operatorname{Par}(P)\subset\Gamma(P)$
is some nowhere vanishing subspace, vector subspace, nowhere vanishing
algebra or subalgebra. By the last example it then follows that each
of these structures can be lifted to $\operatorname{Par}(P)^{n}$
for any $n\geq1$.
\end{example}
\end{example}
Given two NPT theories $S_{2,\delta(\ell')}$ and $S_{3,\kappa(\ell'')}$,
define their \emph{sum} as the theory $S_{2,\delta(\ell')}+S_{3,\kappa(\ell'')}$
whose parameter and whose parameterized operator is $\Psi_{\delta(\ell'),\kappa(\ell'')}=\Psi_{2,\delta(\ell')}+\Psi_{3,\kappa(\ell'')}$.
In an analogous way we define the \emph{composition} \emph{theory}
$S_{2,\delta(\ell')}\circ S_{3,\kappa(\ell'')}$. We say that $S_{1,\varepsilon(\ell)}$
\emph{emerges} from $S_{2,\delta(\ell')}+S_{3,\kappa(\ell'')}$ (resp.
$S_{2,\delta(\ell')}\circ S_{3,\kappa(\ell'')}$) if there exists
$F_{+}:\operatorname{Par}(P)^{\ell}\rightarrow\operatorname{Par}(P)^{\ell'}\times\operatorname{Par}(P)^{\ell''}$
(resp. $F_{\circ}$) such that 
\[
S_{1}[\varphi;\varepsilon(\ell)]=(S_{2}+S_{3})[\varphi;F_{+}(\varepsilon(\ell))]\quad\text{(resp.}\;S_{1}[\varphi;\varepsilon(\ell)]=(S_{2}\circ S_{3})[\varphi;F_{\circ}(\varepsilon(\ell))]\text{)},
\]
exactly as in the previous situations. The next lemmas are independent
of the parameter degrees of the theories $S_{2,\delta(\ell')}$ and
$S_{3,\delta(\ell'')}$. Thus, following Remark \ref{remark_notations},
they will be omitted. 
\begin{lem}
\label{lemma_3} Let $S_{1,\varepsilon(\ell)}$, $S_{2,\delta}$ and
$S_{3,\kappa}$ be three NPT, defined by $\Psi_{1,\varepsilon(\ell)}$,
$\Psi_{2,\delta}$ and $\Psi_{3,\kappa}$. Assume that:
\begin{enumerate}
\item the theory $S_{1,\varepsilon(\ell)}$ is additive or partially additive
on parameters;
\item the theory $S_{1,\varepsilon(\ell)}$ emerges from both $S_{2,\delta}$
and $S_{3,\kappa}$, while theory $S_{2,\delta}$ emerges from $S_{3,\kappa}$;
\end{enumerate}
Then $S_{1,\varepsilon(\ell)}$ emerges from the sum $S_{2,\delta}+S_{3,\kappa}$.
If, in addition, $S_{3,\kappa}$ is also additive or partially additive
on parameters, then the emergence of $S_{1,\varepsilon(\ell)}$ from
$S_{2,\delta}+S_{3,\kappa}$ is equivalent to a specific new emergence
of $S_{1,\varepsilon(\ell)}$ from $S_{3,\kappa}$ .
\end{lem}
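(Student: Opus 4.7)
The plan is to reduce the equality of action functionals to equality of extended operators via Lemma~\ref{key_remark}, and then to exploit partial additivity of $S_{1,\varepsilon(\ell)}$ to ``split'' the parameter $\varepsilon(\ell)$ into two halves, each of which can be transferred along one of the assumed emergences.

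For the first conclusion, let $F_{2}$ and $F_{3}$ be the maps realizing the emergences of $S_{1}$ from $S_{2}$ and from $S_{3}$, so that $\Psi_{1,\eta}=\Psi_{2,F_{2}(\eta)}$ and $\Psi_{1,\eta}=\Psi_{3,F_{3}(\eta)}$ for every $\eta\in\operatorname{Par}(P)^{\ell}$. Since $\operatorname{Par}(P)^{\ell}$ is a nowhere vanishing space it is closed under nonzero scalar multiplication, so $\varepsilon(\ell)/2$ is a valid element, and since $\varepsilon(\ell)\neq 0$ its two halves are not opposite, so the sum $\varepsilon(\ell)/2+\varepsilon(\ell)/2=\varepsilon(\ell)$ is also legal. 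Partial additivity of $S_{1}$ then yields
\begin{equation*}
\Psi_{1,\varepsilon(\ell)}\;=\;\Psi_{1,\varepsilon(\ell)/2}+\Psi_{1,\varepsilon(\ell)/2}\;=\;\Psi_{2,F_{2}(\varepsilon(\ell)/2)}+\Psi_{3,F_{3}(\varepsilon(\ell)/2)},
\end{equation*}
so that the map $F_{+}(\varepsilon(\ell))=(F_{2}(\varepsilon(\ell)/2),F_{3}(\varepsilon(\ell)/2))$ realizes the emergence of $S_{1}$ from $S_{2}+S_{3}$. Note that only the emergences of $S_{1}$ from $S_{2}$ and from $S_{3}$ were used; the hypothesis that $S_{2}$ emerges from $S_{3}$ is reserved for the second conclusion.

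For the equivalence, assume in addition that $S_{3}$ is partially additive and let $G$ realize the emergence of $S_{2}$ from $S_{3}$, i.e.\ $\Psi_{2,\delta}=\Psi_{3,G(\delta)}$. Given any emergence datum $F_{+}=(F_{+}^{(1)},F_{+}^{(2)})$ of $S_{1}$ from $S_{2}+S_{3}$, substitute $\Psi_{2,F_{+}^{(1)}(\varepsilon(\ell))}=\Psi_{3,G(F_{+}^{(1)}(\varepsilon(\ell)))}$ and invoke partial additivity of $S_{3}$ (whenever the two summands are not opposite) to collapse the result into $\Psi_{3,G(F_{+}^{(1)}(\varepsilon(\ell)))+F_{+}^{(2)}(\varepsilon(\ell))}$, giving a specific new emergence of $S_{1}$ from $S_{3}$ via $H=G\circ F_{+}^{(1)}+F_{+}^{(2)}$. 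Conversely, any emergence of $S_{1}$ from $S_{3}$ reproduces an emergence from $S_{2}+S_{3}$ by re-applying the splitting construction of the first paragraph, and the two procedures are mutually inverse up to the expected identifications.

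The main technical nuisance is bookkeeping in the nowhere vanishing space of parameters: the sums $\varepsilon(\ell)/2+\varepsilon(\ell)/2$ and $G(F_{+}^{(1)}(\varepsilon(\ell)))+F_{+}^{(2)}(\varepsilon(\ell))$ must genuinely lie in the relevant $\operatorname{Par}(P)^{\bullet}$, which fails precisely when two summands are opposite. The first case is automatic from $\varepsilon(\ell)\neq 0$, but the second imposes a mild restriction on $F_{+}$ unless one upgrades ``partially additive'' to ``additive'', in which case the sum is unconditionally defined. This is the reason the statement accommodates both the partial and the full additivity hypothesis.
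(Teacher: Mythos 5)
Your proof is correct and follows essentially the same route as the paper: split $\varepsilon(\ell)$ as $\varepsilon(\ell)/2+\varepsilon(\ell)/2$, use (partial) additivity of $S_{1}$ to write $\Psi_{1,\varepsilon(\ell)}$ as a sum of two transferred operators, and, for the equivalence, use the emergence of $S_{2}$ from $S_{3}$ together with (partial) additivity of $S_{3}$ to collapse the sum into a single $\Psi_{3}$-parameter. Your observations that the first conclusion does not actually use the hypothesis that $S_{2}$ emerges from $S_{3}$, and that the partially additive case needs the summands to be non-opposite, are accurate refinements of the paper's argument.
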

\begin{proof}
Assume first that $S_{1,\varepsilon(\ell)}$ is additive. From the
second hypothesis we conclude that $\Psi_{1,\varepsilon(\ell)}=\Psi_{2,F(\varepsilon(\ell))}$,
$\Psi_{1,\varepsilon(\ell)}=\Psi_{3,G(\varepsilon)}$ and $\Psi_{2,\delta}=\Psi_{3,H(\delta)}$
for certain functions $F,G,H$. Summing the first two of these conditions
and using the third one with $\delta=F(\varepsilon(\ell))$ we find
\begin{equation}
2\Psi_{1,\varepsilon(\ell)}=\Psi_{3,H(F(\varepsilon(\ell)))}+\Psi_{3,G(\varepsilon(\ell))}.\label{lemma_additive_1}
\end{equation}
On the other hand, since $S_{1,\varepsilon(\ell)}$is additive, we
get
\begin{equation}
\Psi_{1,\varepsilon(\ell)}=\Psi_{3,H(F(\varepsilon(\ell)/2))}+\Psi_{3,G(\varepsilon(\ell)/2)},\label{lemma_additive_2}
\end{equation}
so that $S_{1}[\varphi;\varepsilon(\ell)]=(S_{2}+S_{3})[\varphi;K(\varepsilon(\ell))]$
with $K(\varepsilon(\ell))=(G(\varepsilon(\ell)/2),H(F(\varepsilon(\ell)/2)))$,
finishing the first part of the proof in the additive case. For the
second part, assume that $S_{3,\kappa}$ is additive on parameters.
In this case, the right-hand side of the expression above is equivalent
to $\Psi_{3,H(F(\varepsilon(\ell)/2))+G(\varepsilon(\ell)/2)}$, meaning
that $S_{1}[\varphi;\varepsilon(\ell)]=S_{3}[\varphi;L(\varepsilon(\ell))]$
with $L(\varepsilon(\ell))=H(F(\varepsilon(\ell)/2))+G(\varepsilon(\ell)/2)$.
Now, observe that nowhere in the proof have we used that the parameter
space contains the null vector or opposite vectors. This means that
the same arguments work equally well in the partially additive setting.
\end{proof}
\begin{rem}
\label{remark_emergence_scalar}One can similarly show that if $S_{1,\varepsilon(\ell)}$
is additive or partially additive which emerges from $S_{2,\delta}$,
then it emerges from $c\cdot S_{2,\delta}$ for every $c\neq0$. Indeed,
if $\Psi_{1,\varepsilon(\ell)}=\Psi_{2,F(\varepsilon(\ell))}$ and
if $\Psi_{2,F(\varepsilon(\ell))}=\frac{c}{c}\Psi_{2,F(\varepsilon(\ell))}=c\Psi_{2,F(\varepsilon(\ell))/c}$,
then $\Psi_{1,\varepsilon}=c\Psi_{2,F(\varepsilon(\ell))/c}$. Reciprocally,
if $S_{1,\varepsilon(\ell)}$ emerges from $c\cdot S_{2,\delta}$,
with $c\neq0$, then it also emerges from $S_{2,\delta}$.
\end{rem}
Given $\ell\geq0$, we say that a nice background $\mathcal{NB}$
has \emph{degree $\ell$} \emph{square roots }if it has a degree $\ell$
nowhere vanishing algebra of parameters or degree $\ell$ algebra
of parameters and if for every $\varepsilon(\ell)$ there exists some
$\varepsilon(\ell)^{1/2}$ such that $(\varepsilon(\ell)^{1/2})^{2}\equiv\varepsilon(\ell)^{1/2}*\varepsilon(\ell)^{1/2}=\varepsilon(\ell)$.
We write $\mathcal{NB}^{\ell}$ to denote this fact.
\begin{example}
\label{example_higher_degree_square_roots} From Remark \ref{induced_functional_calculus},
if $\mathcal{NB}$ has a degree $\ell$ nowhere vanishing algebra
of parameters or degree $\ell$ algebra of parameters, then it also
has for every $\ell'=n\ell$, with $n\geq1$. Relatively to this componentwise
structure, one can show that if $\mathcal{NB}$ has degree $\ell$
square roots, then it also has for every $\ell'=n\ell$, i.e, if $\mathcal{NB}$
is of type $\mathcal{NB}^{\ell}$, then it also is of type $\mathcal{NB}^{n\ell}.$
\end{example}
We can now work on the third step.
\begin{lem}
\label{lemma_4}Let $S_{1,\varepsilon(\ell)}$, $S_{2,\delta}$ and
$S_{3,\kappa}$ be three NPT with parameterized operators $\Psi_{1,\varepsilon(\ell)}$,
$\Psi_{2,\delta}$ and $\Psi_{3,\kappa}$, as above. Assume that:
\begin{enumerate}
\item the theory $S_{1,\varepsilon(\ell)}$ is multiplicative or partially
multiplicative on parameters;
\item the theory $S_{1,\varepsilon(\ell)}$ emerges from both $S_{2,\delta}$
and $S_{3,\kappa}$, while theory $S_{2,\delta}$ emerges from $S_{3,\kappa}$;
\item the underlying nice background has degree $\ell$ square roots.
\end{enumerate}
Then $S_{1,\varepsilon(\ell)}$ emerges from the composition $S_{2,\delta}\circ S_{3,\kappa}$.
If in addition, $S_{3,\kappa}$ is also multiplicative or partially
multiplicative on parameters, then the emergence of $S_{1,\varepsilon(\ell)}$
from $S_{2,\delta}\circ S_{3,\kappa}$ is equivalent to a specific
new emergence of $S_{1,\varepsilon(\ell)}$ from $S_{3,\kappa}$.
\end{lem}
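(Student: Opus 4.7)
The plan is to mimic the proof of Lemma~\ref{lemma_3}, replacing the sum in the composition theory with actual operator composition, and replacing the halving $\varepsilon(\ell) \mapsto \varepsilon(\ell)/2$ with the square root $\varepsilon(\ell) \mapsto \varepsilon(\ell)^{1/2}$ furnished by hypothesis~(3). By Lemma~\ref{key_remark}, it is enough to work at the level of extended operators. From the three emergence relations guaranteed by hypothesis~(2), I obtain functions $F$, $G$, $H$ such that
$$\Psi_{1,\varepsilon(\ell)} = \Psi_{2,F(\varepsilon(\ell))}, \qquad \Psi_{1,\varepsilon(\ell)} = \Psi_{3,G(\varepsilon(\ell))}, \qquad \Psi_{2,\delta} = \Psi_{3,H(\delta)}.$$

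Composing the first two identities on the operator level gives $\Psi_{1,\varepsilon(\ell)}\circ\Psi_{1,\varepsilon(\ell)} = \Psi_{2,F(\varepsilon(\ell))}\circ\Psi_{3,G(\varepsilon(\ell))}$. Invoking multiplicativity of $S_{1,\varepsilon(\ell)}$ collapses the left-hand side to $\Psi_{1,\varepsilon(\ell)*\varepsilon(\ell)}$. Because the underlying nice background has degree $\ell$ square roots, I can substitute $\varepsilon(\ell)\mapsto \varepsilon(\ell)^{1/2}$ to obtain
$$\Psi_{1,\varepsilon(\ell)} = \Psi_{2,F(\varepsilon(\ell)^{1/2})}\circ \Psi_{3,G(\varepsilon(\ell)^{1/2})},$$
which is precisely the emergence of $S_{1,\varepsilon(\ell)}$ from $S_{2,\delta}\circ S_{3,\kappa}$ via the map $F_{\circ}(\varepsilon(\ell)) = (F(\varepsilon(\ell)^{1/2}),\, G(\varepsilon(\ell)^{1/2}))$.

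For the second assertion, suppose $S_{3,\kappa}$ is additionally multiplicative. Specializing the third identity at $\delta = F(\varepsilon(\ell)^{1/2})$ replaces the factor $\Psi_{2,F(\varepsilon(\ell)^{1/2})}$ by $\Psi_{3,H(F(\varepsilon(\ell)^{1/2}))}$, and then multiplicativity of $S_{3,\kappa}$ fuses the resulting composition into the single operator $\Psi_{3,\, H(F(\varepsilon(\ell)^{1/2}))\,*\,G(\varepsilon(\ell)^{1/2})}$. This yields emergence of $S_{1,\varepsilon(\ell)}$ directly from $S_{3,\kappa}$ through the map $L(\varepsilon(\ell)) = H(F(\varepsilon(\ell)^{1/2}))\,*\,G(\varepsilon(\ell)^{1/2})$, and the converse follows by retracing the chain in reverse, thereby establishing the claimed equivalence.

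The only point demanding care is the transition from the fully multiplicative to the \emph{partially} multiplicative setting, where $\operatorname{Par}(P)^{\ell}$ is only a nowhere vanishing algebra. I will verify that at no step does the argument require a zero element, an additive inverse, or a reciprocal: squaring, extracting square roots, and multiplying parameters all stay inside a nowhere vanishing algebra by definition, so the entire derivation carries over verbatim. I will close the proof with this remark, exactly in the spirit of the last sentence in the proof of Lemma~\ref{lemma_3}.
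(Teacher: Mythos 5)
Your proof is correct and follows essentially the same route as the paper's: square the identity $\Psi_{1,\varepsilon(\ell)}=\Psi_{2,F(\varepsilon(\ell))}=\Psi_{3,G(\varepsilon(\ell))}$, invoke multiplicativity to collapse the left-hand side, substitute the square root supplied by hypothesis (3), and for the second part fuse the two factors via $H$ and the multiplicativity of $S_{3,\kappa}$, with the same closing remark that nothing in the argument uses a zero element or inverses, so the partially multiplicative case goes through. If anything, your first part is marginally cleaner than the paper's, since you keep the factor as $\Psi_{2,F(\varepsilon(\ell)^{1/2})}$ rather than first rewriting it as $\Psi_{3,H(F(\varepsilon(\ell)))}$, which matches the definition of emergence from the composition theory $S_{2,\delta}\circ S_{3,\kappa}$ more directly.
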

\begin{proof}
We will work only in the multiplicative case. The partially multiplicative
one will follows from the same argument used in Lemma \ref{lemma_3}.
The first part of the proof also follows the same lines of Lemma \ref{lemma_3},
the only difference being that equation (\ref{lemma_additive_1})
is now replaced with 
\[
\Psi_{1,\varepsilon(\ell)}\circ\Psi_{1,\varepsilon(\ell)}=\Psi_{3,H(F(\varepsilon(\ell)))}\circ\Psi_{3,G(\varepsilon(\ell))}=\Psi_{3,G(\varepsilon(\ell))}\circ\Psi_{3,H(F(\varepsilon(\ell)))}.
\]
Since $S_{1,\varepsilon(\ell)}$ is multiplicative and since the background
$\mathcal{NB}^{\ell}$ has degree $\ell$ square roots we can write
an analogue for (\ref{lemma_additive_2}):
\[
\Psi_{1,\varepsilon(\ell)}=\Psi_{3,H(F(\varepsilon(\ell)^{1/2}))}\circ\Psi_{3,G(\varepsilon(\ell)^{1/2})}=\Psi_{3,G(\varepsilon(\ell)^{1/2})}\circ\Psi_{3,H(F(\varepsilon(\ell)^{1/2}))},
\]
so that $S_{1}[\varphi,\varepsilon(\ell)]=(S_{2}\circ S_{3})[\varphi;K(\varepsilon(\ell))]$
with $K(\varepsilon(\ell))=(G(\varepsilon(\ell)^{1/2}),H(F(\varepsilon(\ell)^{1/2})))$,
finishing this first part. For the second part, assuming $S_{3,\kappa}$
multiplicative on parameters, just notice that the right-hand side
of the above expression becomes 
\[
\Psi_{3,G(\varepsilon(\ell)^{1/2})*H(F(\varepsilon(\ell)^{1/2}))}=\Psi_{3,H(F(\varepsilon(\ell)^{1/2}))*G(\varepsilon(\ell)^{1/2})},
\]
finishing the proof. 
\end{proof}
\begin{cor}
\label{corollary_powers}Let $S_{\varepsilon(\ell)}$ be a multiplicative
or partially multiplicative NPT defined in $\mathcal{NB}^{\ell}$
and with operator $\Psi_{\varepsilon(\ell)}$. Then, for every $l,m\geq1$,
the theories $S_{\varepsilon(\ell)}^{l}$ and $S_{\varepsilon(\ell)}^{m}$
emerges each one from the other, where $S_{\varepsilon(\ell)}^{i}=S_{\varepsilon(\ell)}\circ\cdots\circ S_{\varepsilon(\ell)}$.
\end{cor}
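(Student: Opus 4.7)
The corollary is essentially a packaging of Lemma \ref{lemma_4} (the multiplicativity step), together with the obvious fact that a theory trivially emerges from itself. The plan is to first establish the special case $S_{\varepsilon(\ell)}$ emerges from $S_{\varepsilon(\ell)}^{m}$ for every $m\geq 1$ by induction on $m$, and then bootstrap from this to the general bidirectional emergence $S_{\varepsilon(\ell)}^{m}\leftrightarrow S_{\varepsilon(\ell)}^{l}$ using multiplicativity.

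For the induction, the base case $m=1$ is trivial since $\tilde\Psi_{\varepsilon(\ell)}=\tilde\Psi_{F(\varepsilon(\ell))}$ with $F=\mathrm{id}$ witnesses emergence via Lemma \ref{key_remark}. For the inductive step, write $S_{\varepsilon(\ell)}^{m}=S_{\varepsilon(\ell)}\circ S_{\varepsilon(\ell)}^{m-1}$ and apply Lemma \ref{lemma_4} with $S_{1}=S_{2}=S_{\varepsilon(\ell)}$ and $S_{3}=S_{\varepsilon(\ell)}^{m-1}$. The hypotheses are exactly those given: $S_{1}$ is (partially) multiplicative by assumption, the nice background $\mathcal{NB}^{\ell}$ has degree $\ell$ square roots by assumption, $S_{1}$ emerges trivially from $S_{2}$, and the emergences $S_{1}\Rightarrow S_{3}$ and $S_{2}\Rightarrow S_{3}$ are both the inductive hypothesis. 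Lemma \ref{lemma_4} then yields emergence of $S_{1}$ from $S_{2}\circ S_{3}=S_{\varepsilon(\ell)}^{m}$, closing the induction.

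For the general case, fix $l,m\geq 1$ and let $F:\operatorname{Par}(P)^{\ell}\to\operatorname{Par}(P)^{\ell}$ be the function produced above, so that $\Psi_{\varepsilon(\ell)}=\Psi_{F(\varepsilon(\ell))}^{l}$. Composing this identity with itself $m$ times and using multiplicativity on the right, one obtains
\[
\Psi_{\varepsilon(\ell)}^{m}=\Psi_{F(\varepsilon(\ell))}^{lm}=\Psi_{F(\varepsilon(\ell))^{lm}}=\Psi_{(F(\varepsilon(\ell))^{m})^{l}}=\Psi_{G(\varepsilon(\ell))}^{l},
\]
where $G(\varepsilon(\ell))=F(\varepsilon(\ell))^{m}\in\operatorname{Par}(P)^{\ell}$, well defined because $\operatorname{Par}(P)^{\ell}$ is closed under the multiplication $*$. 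By Lemma \ref{key_remark} this proves $S_{\varepsilon(\ell)}^{m}$ emerges from $S_{\varepsilon(\ell)}^{l}$. Swapping the roles of $l$ and $m$ yields the reverse emergence.

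There is no genuine obstacle here: the only subtle point is checking that all the hypotheses of Lemma \ref{lemma_4} are available when applied iteratively (in particular, that $S_{\varepsilon(\ell)}^{m-1}$ does not need to be multiplicative — only $S_{1}$ does — and that degree $\ell$ square roots survive since the ambient background is fixed). In the partially multiplicative setting the same argument goes through verbatim, since $G(\varepsilon(\ell))=F(\varepsilon(\ell))^{m}$ never requires the existence of a zero or inverse, only closure of the nowhere vanishing algebra under the multiplication $*$.
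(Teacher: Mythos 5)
Your first half coincides with the paper's: the paper also fixes one exponent, inducts on the other, takes self-emergence as the base case, and closes the induction with Lemma \ref{lemma_4} applied to $S_{1}=S_{2}=S_{\varepsilon(\ell)}$ and $S_{3}=S_{\varepsilon(\ell)}^{m-1}$, exactly as you do. Where you diverge is the passage from ``$S_{\varepsilon(\ell)}$ emerges from $S_{\varepsilon(\ell)}^{l}$'' to the bidirectional statement for arbitrary $l,m$. The paper composes the established emergence on the left with $S_{\varepsilon(\ell)}^{m-1}$ to get that $S^{m}_{\varepsilon(\ell)}$ emerges from $S^{m-1+l}_{\varepsilon(\ell)}$, and then disposes of the reverse direction with a rather terse appeal to the invertibility of the identity component of the witness; your argument instead collapses $\Psi_{\varepsilon(\ell)}^{m}$ to a single parameter via multiplicativity and re-expands it as an $l$-fold power, which is symmetric in $l$ and $m$ from the outset and so gives both directions at once. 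This is more explicit and arguably more convincing than the paper's reverse step.

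One point needs care, though it is repairable inside your own framework. By the paper's definition, the composition theory $S_{2,\delta}\circ S_{3,\kappa}$ is parameterized by the \emph{pair} $(\delta,\kappa)$, so $S_{\varepsilon(\ell)}^{l}$ is parameterized by $(\operatorname{Par}(P)^{\ell})^{l}$, and the witness produced by your induction (via Lemma \ref{lemma_4}) is a map $F$ into that product, with
\[
\Psi_{1,\varepsilon(\ell)}=\Psi_{F_{1}(\varepsilon(\ell))}\circ\cdots\circ\Psi_{F_{l}(\varepsilon(\ell))},
\]
not a single parameter $F(\varepsilon(\ell))$ satisfying $\Psi_{\varepsilon(\ell)}=(\Psi_{F(\varepsilon(\ell))})^{l}$ as your displayed computation assumes. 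Producing a genuinely diagonal witness would require $l$-th roots in $\operatorname{Par}(P)^{\ell}$, whereas only square roots are assumed. The fix is immediate: by multiplicativity, $\Psi_{\varepsilon_{1}}\circ\cdots\circ\Psi_{\varepsilon_{m}}=\Psi_{\varepsilon_{1}*\cdots*\varepsilon_{m}}$, so the map $K(\varepsilon_{1},\ldots,\varepsilon_{m})=F(\varepsilon_{1}*\cdots*\varepsilon_{m})$ witnesses the emergence of $S_{\varepsilon(\ell)}^{m}$ from $S_{\varepsilon(\ell)}^{l}$, and this is symmetric in $l$ and $m$ exactly as in your argument. With that adjustment your proof is correct, and in the partially multiplicative case it goes through as you say, since only closure under $*$ is used.
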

\begin{proof}
Fixed $m=1$, use induction on $l$. The base of induction is the
fact that a theory always emerges from itself. For the induction step,
use previous lemma. This implies that $S_{\varepsilon(\ell)}$ emerges
from $S_{\varepsilon(\ell)}^{l}$ for every $l\geq1$. Consequently,
$S_{\varepsilon(\ell)}^{m-1}\circ S_{\varepsilon(\ell)}$ emerges
from $S_{\varepsilon(\ell)}^{m-1}\circ S_{\varepsilon(\ell)}^{l}$
for every $m,l\geq1$, where, by definition $S_{\varepsilon(\ell)}^{0}$
is the theory whose operator is the identity. Since $S_{\varepsilon(\ell)}^{i}\circ S_{\varepsilon(\ell)}^{j}=S_{\varepsilon(\ell)}^{i+j}$,
we conclude that $S_{\varepsilon(\ell)}^{m}$ emerges from $S_{\varepsilon(\ell)}^{m-1+l}=S_{\varepsilon(\ell)}^{l(m)}$,
with $m\leq l(m)$. On the other hand, since the identity is an invertible
map, we see that $S_{\varepsilon(\ell)}^{l(m)}$ also emerges from
$S_{\varepsilon(\ell)}^{m}$. 
\end{proof}
Recall that if a nice background $\mathcal{NB}$ is actioned by degree
$\ell'$ parameters, we write $\mathcal{NB}_{\ell'}$ to denote this
fact. Furthermore, by the above, if $\mathcal{NB}$ has degree $\ell$
square roots, we write $\mathcal{NB}^{\ell}$. Thus, from now on,
if $\mathcal{NB}$ has both properties we will write $\mathcal{NB}_{\ell'}^{\ell}$.

\section{\label{step_4_1}Fourth Step}

$\quad\;\,$As a consequence of the previous lemmas we can prove the
fourth step. We say that a functional calculus in a nice background
$\mathcal{NB}_{\ell''}$ is \emph{unital} if its domain $C_{\ell''}(P;\mathbb{K})$
contains the constant function $f\equiv1$.
\begin{lem}
\label{lemma_4_1} Let $S_{\varepsilon(\ell)}$ be a homomorphic or
partially homomorphic on parameters NPT, defined in a nice background
$\mathcal{NB}_{\ell''}^{\ell}$ and whose parameterized operator is
$\Psi_{\varepsilon(\ell)}$. Let $C_{\ell''}(P;\mathbb{K})$ be the
domain of a unital functional calculus in $\mathcal{NB}_{\ell''}^{\ell}$.
Then $S_{\varepsilon(\ell)}$ emerges from every theory whose operator
is of the form 
\[
p_{\ell'}^{l}[\Psi]=\sum_{i=1}^{l}m_{\delta_{i}(\ell')}^{i}[\Psi]=\sum_{i=1}^{l}f_{i}(\delta_{i}(\ell'))\Psi^{i},
\]
where $\ell''=l\ell'$, $f_{i}\in C_{\ell''}^{\ell'}(P;\mathbb{K})$
for $i=1,...,l$ and such that $\Psi$ is right-invertible\footnote{The condition on $f_{i}$ makes sense due Remark \ref{induced_functional_calculus}. }.
\end{lem}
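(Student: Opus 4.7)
The proof strategy is to decompose $p_{\ell'}^l[\Psi]$ into its $l$ monomials $m_{\delta_i(\ell')}^i = f_i(\delta_i(\ell'))\Psi^i$, handle each monomial via the starting-point Lemma \ref{lemma_2}, and then glue the results together using the additivity Lemma \ref{lemma_3} inductively. For each $i$, the monomial theory fits exactly the shape treated in Lemma \ref{lemma_2}: its coefficient $f_i$ lies in $C_{\ell''}^{\ell'}(P;\mathbb{K})$, which by Remark \ref{induced_functional_calculus} is the domain of a functional calculus on $\mathcal{NB}_{\ell''}^{\ell}$ regarded as actioned by degree $\ell'$ parameters; and $\Psi^i$ is right-invertible, with explicit right-inverse obtained by composing the right-inverse of $\Psi$ with itself $i$ times, since $\Psi$ itself is right-invertible. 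Consequently Lemma \ref{lemma_2} yields that \emph{any} NPT whatsoever emerges from $m_{\delta_i(\ell')}^i$. In particular $S_{\varepsilon(\ell)}$ does, and so does any partial sum of the remaining monomials, viewed as an autonomous NPT.

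With single-monomial emergence in hand, I would then proceed by induction on $l$. The base case $l=1$ is precisely the previous observation. For the inductive step, assuming that $S_{\varepsilon(\ell)}$ emerges from the partial sum $\sum_{i=1}^{k} m_{\delta_i(\ell')}^i$, I apply Lemma \ref{lemma_3} with $S_1 = S_{\varepsilon(\ell)}$, $S_2 = \sum_{i=1}^k m_{\delta_i(\ell')}^i$ and $S_3 = m_{\delta_{k+1}(\ell')}^{k+1}$. The additivity hypothesis on $S_1$ is supplied by the fact that $S_{\varepsilon(\ell)}$ is homomorphic or partially homomorphic on parameters. The emergence of $S_1$ from $S_2$ is the induction hypothesis, whereas the emergences of $S_1$ from $S_3$ and of $S_2$ from $S_3$ both follow from the universal conclusion of Lemma \ref{lemma_2} recalled above. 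Lemma \ref{lemma_3} then delivers the emergence of $S_{\varepsilon(\ell)}$ from $S_2 + S_3 = \sum_{i=1}^{k+1} m_{\delta_i(\ell')}^i$, closing the induction at $\sum_{i=1}^l m_{\delta_i(\ell')}^i = p_{\ell'}^l[\Psi]$. The parameter space assembles as $\operatorname{Par}(P)^{\ell'}\times\cdots\times\operatorname{Par}(P)^{\ell'} = \operatorname{Par}(P)^{l\ell'} = \operatorname{Par}(P)^{\ell''}$, matching the functional calculus $C_{\ell''}(P;\mathbb{K})$ stipulated in the hypothesis.

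I do not expect a serious obstacle beyond careful bookkeeping of parameter degrees: each invocation of Lemma \ref{lemma_2} must use the induced calculus $C_{\ell''}^{\ell'}(P;\mathbb{K})$ rather than $C_{\ell''}(P;\mathbb{K})$, and the constraint $\ell'' = l\ell'$ is exactly what guarantees that these induced calculi aggregate consistently into the global one at the end of the induction. The unitality of the calculus functions as a background consistency condition, ensuring that the trivial coefficient $1$ is always available whenever a monomial needs to be isolated and treated as an autonomous NPT in the intermediate applications of Lemma \ref{lemma_2}.
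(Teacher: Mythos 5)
Your proof is correct, but it follows a genuinely different route from the paper's. You decompose $p_{\ell'}^{l}[\Psi]$ flatly into its $l$ monomials $f_{i}(\delta_{i}(\ell'))\Psi^{i}$, dispatch each one by Lemma \ref{lemma_2} (legitimately, since $\Psi^{i}$ is right-invertible whenever $\Psi$ is and $f_{i}$ lies in the induced domain $C_{\ell''}^{\ell'}(P;\mathbb{K})$ of Remark \ref{induced_functional_calculus}), and then glue with the additivity Lemma \ref{lemma_3} alone, by induction on the number of monomials. The paper instead uses a Horner-type factorization, writing $p_{\ell'}^{l}[\Psi]=\Gamma_{1}\circ\Psi$ with $\Gamma_{1}=f_{1}\cdot I+\Gamma_{2}\circ\Psi$ and recursing, which forces an alternation between the composition Lemma \ref{lemma_4} and the sum Lemma \ref{lemma_3} and, via the intermediate theory $1\cdot\Psi$, is what actually consumes the unitality of the functional calculus and the degree $\ell$ square roots (needed for Lemma \ref{lemma_4}). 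Your route is shorter and, as a byproduct, never invokes unitality or square roots at all for this particular lemma --- so your closing remark that unitality is needed ``whenever a monomial needs to be isolated'' is not accurate for your own argument; those hypotheses are simply idle in your proof, though they are of course still required downstream (the multivariate Theorem \ref{final_them} genuinely needs the composition structure, which is presumably why the authors set up the univariate case in Horner form). The parameter bookkeeping you describe, assembling $[\operatorname{Par}(P)^{\ell'}]^{l}=\operatorname{Par}(P)^{\ell''}$ through the iterated sums (with the harmless divisions of $\varepsilon(\ell)$ by $2$ produced at each application of Lemma \ref{lemma_3}, which the nowhere vanishing space structure permits), is sound.
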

\begin{proof}
For each $j=1,...,l$, let $\Gamma_{j}=\sum_{i=j}^{l}f_{i}(\delta_{i}(\ell'))\Psi^{i-1}$
and notice that 
\[
\sum_{i=1}^{l}\Psi_{i,\delta_{i}(\ell')}=(\sum_{i=1}^{l}f_{i}(\delta_{i}(\ell'))\Psi^{i-1})\circ\Psi=\Gamma_{1}(\delta_{I}(\ell'))\circ\Psi.
\]
Since $\Psi$ is right-invertible and since the nice background $\mathcal{NB}_{\ell''}^{\ell}$
has unital functional calculus, from Lemma \ref{lemma_2} it follows
that $S_{\varepsilon(\ell)}$ emerges from $1\cdot\Psi$. On the other
hand, again from Lemma \ref{lemma_2} we see that the theory defined
by the operator $\Gamma_{1}$ also emerges from that defined by $1\cdot\Psi$.
Thus, if $S_{\varepsilon(\ell)}$ itself emerges from $\Gamma_{1}$
one can use Lemma \ref{lemma_4} to conclude that it actually emerges
from $\Gamma_{1}\circ\Psi$. In turn, notice that $\Gamma_{1}=f_{1}\cdot I+\Gamma_{2}\circ\Psi=\Gamma_{2}\circ\Psi+f_{1}\cdot I$.
But, since $I$ is right-invertible and since $f_{1}\in C_{\ell''}^{\ell'}(P;\mathbb{K})$,
from Remark \ref{induced_functional_calculus} and from Lemma \ref{lemma_2}
we get that $S_{\varepsilon(\ell)}$ emerges from the theory defined
by $f_{1}\cdot I$, while by the same argument we see that $\Gamma_{2}\circ\Psi$
emerges from $f_{1}\cdot I$. Therefore, if $S_{\varepsilon(\ell)}$
emerges from $\Gamma_{2}\circ\Psi$ we will be able to use Lemma \ref{lemma_3}
to conclude that it emerges from $\Gamma_{1}$, finishing the proof.
It happens that, as done for $\Gamma_{1}\circ\Psi$, we see that $\Gamma_{2}$
emerges from $\Psi$ and we already know that $S_{\varepsilon(\ell)}$
emerges from $\Psi$. Thus, our problem is to prove that $S_{\varepsilon(\ell)}$
emerges from $\Gamma_{2}$ instead of from $\Gamma_{1}$. A finite
induction argument proves that if $S_{\varepsilon(\ell)}$ emerges
from $\Gamma_{l}$, then it emerges from $\Gamma_{j}$, for each $j=1,...,l$.
Recall that $\Gamma_{l}=f_{l}\cdot\Psi^{l-1}$. Since $f_{l}\in C_{\ell''}^{\ell'}(P;\mathbb{K})$
we can use Lemma \ref{lemma_2} to see that $S_{\varepsilon(\ell)}$
really emerges from $\Gamma_{l}$.
\end{proof}
Let $\operatorname{Map}(\operatorname{Par}(P)^{\ell'};\mathbb{K})$
be the algebra of functions $f:\operatorname{Par}(P)^{\ell'}\rightarrow\mathbb{K}$
and let $\operatorname{Map}(\operatorname{Par}(P)^{\ell'};\mathbb{K})[x]$
be the corresponding polynomial algebra. Recall that any polynomial
ring has a grading by the degree, so that we can write 
\[
\operatorname{Map}(\operatorname{Par}(P)^{\ell'};\mathbb{K})[x]\simeq\bigoplus_{l\geq0}\operatorname{Map}_{l}(\operatorname{Par}(P)^{\ell'};\mathbb{K})[x],
\]
where the right-hand side consists of the sum over the $\mathbb{K}$-vector
spaces of polynomials with fixed degree $l$. By extension of scalars
$\operatorname{Nice}(E)$ can be regarded as an algebra over $\operatorname{Map}(\operatorname{Par}(P)^{\ell'};\mathbb{K})$,
denoted by $\operatorname{Nice}_{\mathbb{K}}^{\ell'}(E)$, so that
for every $l\geq1$ we have an evaluation map
\begin{equation}
ev_{\ell';l}:\operatorname{Map}_{l}(\operatorname{Par}(P)^{\ell'};\mathbb{K})[x]\times\operatorname{Nice}(E)\rightarrow\operatorname{Nice}_{\mathbb{K}}^{\ell'}(E)\label{evaluation}
\end{equation}
which takes a polynomial $p_{\ell'}^{l}[x]=\sum_{i=0}^{l}f_{i}\cdot x^{i}$
and an operator $\Psi$ and produces $ev_{l}(p_{\ell'}^{l}[x],\Psi)=p_{\ell'}^{l}[\Psi]=\sum_{i=0}^{l}f_{i}\cdot\Psi^{i}$,
where $\Psi^{i}=\Psi\circ...\circ\Psi$. Let $\operatorname{Poly}_{\mathbb{K};l}^{\ell'}(E)$
denote the image of $ev_{\ell';l}$. If one fixes an operator $\Psi$
such that $\Psi^{0},\Psi^{1},\Psi^{2},\cdots$ are linearly independent
as objects of $\operatorname{Nice}(E)$ the map $ev_{\ell';l}$ becomes
injective. More precisely, we have the following result:
\begin{prop}
Let $\Psi$ be an operator such that $\Psi^{i}$ is linearly independent
of $\Psi^{j}$ for every $1\leq i,j\leq l$, for some $l\geq1$. Then
the evaluation morphism at $\Psi$
\begin{equation}
ev_{\ell';\leq l}^{\Psi}:\operatorname{Map}_{\leq l}(\operatorname{Par}(P)^{\ell'};\mathbb{K})[x]\rightarrow\operatorname{Nice}_{\mathbb{K}}^{\ell'}(E)\label{evaluation_proposition}
\end{equation}
is injective, where the left-hand side is the subspace of polynomials
of degree $d\leq l$.
\end{prop}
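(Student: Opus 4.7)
The plan is to show $\ker(ev_{\ell';\leq l}^{\Psi})=\{0\}$ by reducing, pointwise in the parameter, to the linear independence hypothesis on the powers of $\Psi$. Take $p_{\ell'}^{l}[x]=\sum_{i=0}^{l}f_{i}\cdot x^{i}$ in the kernel, so that $p_{\ell'}^{l}[\Psi]=0$ in $\operatorname{Nice}_{\mathbb{K}}^{\ell'}(E)$. By definition of the extension of scalars, an element of $\operatorname{Nice}_{\mathbb{K}}^{\ell'}(E)$ is identified with a function $\operatorname{Par}(P)^{\ell'}\to\operatorname{Nice}(E)$; concretely, the image $p_{\ell'}^{l}[\Psi]$ corresponds to the assignment
\[
\varepsilon(\ell')\;\longmapsto\;\sum_{i=0}^{l}f_{i}(\varepsilon(\ell'))\,\Psi^{i}.
\]

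Vanishing of this function means that for every fixed $\varepsilon(\ell')\in\operatorname{Par}(P)^{\ell'}$ the identity $\sum_{i=0}^{l}f_{i}(\varepsilon(\ell'))\,\Psi^{i}=0$ holds inside $\operatorname{Nice}(E)$. Invoking the hypothesis that the powers $\Psi^{0},\Psi^{1},\ldots,\Psi^{l}$ are $\mathbb{K}$-linearly independent elements of $\operatorname{Nice}(E)$ then forces $f_{i}(\varepsilon(\ell'))=0$ for every $i=0,\ldots,l$. Since $\varepsilon(\ell')$ was arbitrary, each coefficient $f_{i}$ is the zero function, and therefore $p_{\ell'}^{l}[x]=0$, which is the required injectivity.

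The argument is essentially a tautology once the extension of scalars is unwound, so there is no real obstacle in the body of the proof. The one point deserving care is the interpretation of the hypothesis: read literally it asserts pairwise linear independence of $\Psi^{i}$ and $\Psi^{j}$ for $1\leq i,j\leq l$, whereas the pointwise step above uses joint linear independence of the whole family $\{\Psi^{0},\Psi^{1},\ldots,\Psi^{l}\}$. I would therefore open the proof by making the intended joint reading explicit (in particular including $\Psi^{0}=I$ in the independent family), since this is precisely the condition that makes the evaluation map have trivial kernel; without such an upgrade the pairwise phrasing would not suffice to rule out nontrivial relations of length $\geq 3$ among the powers.
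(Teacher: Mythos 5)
Your proof is correct and follows essentially the same route as the paper's: the paper's own argument is the same one-line observation that a polynomial in the kernel gives a vanishing linear combination of the powers of $\Psi$, whence all coefficients vanish by linear independence. Your caveat about the hypothesis is well taken --- the paper's proof also implicitly uses joint linear independence of $\Psi^{0},\Psi^{1},\ldots,\Psi^{l}$ (including $\Psi^{0}=I$), not merely the pairwise independence for $1\leq i,j\leq l$ that the statement literally asserts, so your explicit upgrade of the hypothesis is the right reading and your pointwise unwinding of the extension of scalars makes the argument more complete than the original.
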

\begin{proof}
The kernel of $ev_{\ell';\leq l}^{\Psi}$ consists of those $p_{\ell'}^{d}[x]=\sum_{i\leq d}f_{i}\cdot x^{i}$
such that $ev_{\ell';\leq l}^{\Psi}(p_{\ell'}^{d}[x])=0=\sum_{i\leq d}f_{i}\cdot\Psi^{i}$.
Due to the linearly independence hypothesis, this is the case iff
$f_{i}=0$, i.e, $p_{\ell'}^{d}[x]=0$.
\end{proof}
Observe that we have another evaluation 
\[
ev_{\ell';l}^{l}:\operatorname{Poly}_{\mathbb{K};l}^{\ell'}(E)\times[\operatorname{Par}(P)^{\ell'}]^{l}\rightarrow\operatorname{Nice}(E)
\]
such that $ev_{\ell';l}^{l}(p_{\ell'}^{l}[\Psi],\delta_{I}(\ell'))=p_{\ell'}^{l}[\Psi](\delta_{I}(\ell'))=\sum_{i=0}^{l}f_{i}(\delta_{i}(\ell'))\Psi^{i}$,
where $\delta_{I}(\ell')=(\delta_{1}(\ell'),...,\delta_{l}(\ell'))$.
Therefore, for every $l\geq1$ we have a complete evaluation 
\[
ev_{\ell'}^{l}:\operatorname{Map}_{l}(\operatorname{Par}(P)^{\ell'};\mathbb{K})[x]\times\operatorname{Nice}(E)\times[\operatorname{Par}(P)^{\ell'}]^{l}\rightarrow\operatorname{Nice}(E)
\]
given by $ev_{\ell'}^{l}(p_{\ell'}^{l}[x],\Psi,\delta_{I}(\ell'))=ev_{\ell';l}^{l}(ev_{\ell';l}(p_{\ell'}^{l}[x],\Psi),\delta_{I}(\ell'))$.
Let $\underline{\operatorname{Nice}(E)}$ be the subset of operators
$\Psi\in\operatorname{Nice}(E)$ whose extension $\tilde{\Psi}\in B(\mathcal{H}(E))$
is right-invertible, let $C_{\ell'}(P;\mathbb{K})\subset\operatorname{Map}(\operatorname{Par}(P)^{\ell'};\mathbb{K})$
be a subset and let $C_{\ell;l}(P;\mathbb{K})$ the set of polynomials
$p_{\ell'}^{l}[x]=\sum f_{i}\cdot x^{i}$ of degree $l$ whose coefficients
belongs to $C_{\ell'}(P;\mathbb{K})$. Restricting (\ref{evaluation}),
we get a map 
\[
\underline{ev}_{\ell';l}^{C}:C_{\ell';l}(P;\mathbb{K})\times\underline{\operatorname{Nice}}(E)\rightarrow\operatorname{Nice}_{\mathbb{K}}^{\ell'}(E).
\]

\begin{defn}
\emph{\label{def_type_I}Let $\mathcal{NB}^{\ell}$ be a nice background
with degree $\ell$ square roots. Let us say that the set of polynomials
$C_{\ell';l}(P;\mathbb{K})$ is }coherent\emph{ in $\mathcal{NB}^{\ell}$
if}
\begin{enumerate}
\item \emph{$\mathcal{NB}^{\ell}$ is actioned by degree $\ell''=l\ell'$
parameters (i.e, it is of the form $\mathcal{NB}_{\ell''}^{\ell}$);}
\item \emph{there exists a set $C_{\ell''}(P;\mathbb{K})\subset\operatorname{Map}(\operatorname{Par}(P)^{\ell''};\mathbb{K})$
which is the domain of unital functional calculus in $\mathcal{NB}_{\ell''}^{\ell}$
and such that $C_{\ell'}(P;\mathbb{K})=C_{\ell''}^{\ell'}(P;\mathbb{K})$.}
\end{enumerate}
\end{defn}
After this discussion we see that Lemma \ref{lemma_4_1} (i.e, the
first version of the fourth step) can be rewritten as following:
\begin{lem}[Lemma \ref{lemma_4_1} revisited]
\label{lemma_4_1_revised}  Let $S_{\varepsilon(\ell)}$ be a homomorphic
or partially homomorphic on parameters NPT, defined in nice background
$\mathcal{NB}_{\ell''}^{\ell}$. Then it emerges from any theory in
the image of $\underline{ev}_{\ell';l}^{C}$, where $\ell''=l\ell'$
and $C_{\ell';l}(P;\mathbb{K})$ is coherent in $\mathcal{NB}_{\ell''}^{\ell}$.
\end{lem}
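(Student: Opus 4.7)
The plan is to observe that this revised statement is a direct repackaging of Lemma \ref{lemma_4_1} in the terminology just introduced (the evaluation map $\underline{ev}_{\ell';l}^{C}$ and the notion of coherence). The proof therefore proceeds by unpacking the image of $\underline{ev}_{\ell';l}^{C}$ and the coherence condition, and then invoking Lemma \ref{lemma_4_1} verbatim. No new analytic content is required; the whole substance of the result is already in Lemma \ref{lemma_4_1}.

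First I would unpack the image. Since $\underline{ev}_{\ell';l}^{C}$ is the restriction of the evaluation map (\ref{evaluation}) to $C_{\ell';l}(P;\mathbb{K}) \times \underline{\operatorname{Nice}}(E)$, an arbitrary element in its image is an operator of the form $p_{\ell'}^{l}[\Psi] = \sum_{i=0}^{l} f_{i}\cdot\Psi^{i}$, where $\Psi \in \underline{\operatorname{Nice}}(E)$ is right-invertible and $p_{\ell'}^{l}[x] \in C_{\ell';l}(P;\mathbb{K})$, so that each coefficient $f_{i}$ lies in $C_{\ell'}(P;\mathbb{K})$. By the definition of the $ev_{\ell';l}^{l}$ composition, the associated parameterized theory has operator $\sum_{i=0}^{l} f_{i}(\delta_{i}(\ell'))\,\Psi^{i}$ at the parameter tuple $\delta_{I}(\ell')$.

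Next I would unpack the coherence hypothesis. By Definition \ref{def_type_I}, $C_{\ell';l}(P;\mathbb{K})$ being coherent in $\mathcal{NB}^{\ell}$ is exactly the assertion that $\mathcal{NB}^{\ell}$ has the form $\mathcal{NB}_{\ell''}^{\ell}$ with $\ell'' = l\ell'$, and that there exists $C_{\ell''}(P;\mathbb{K})$ serving as the domain of a \emph{unital} functional calculus in $\mathcal{NB}_{\ell''}^{\ell}$ with $C_{\ell'}(P;\mathbb{K}) = C_{\ell''}^{\ell'}(P;\mathbb{K})$. Combined with the previous paragraph, the hypotheses of Lemma \ref{lemma_4_1} are then matched one-for-one: $S_{\varepsilon(\ell)}$ is (partially) homomorphic on parameters, the background is $\mathcal{NB}_{\ell''}^{\ell}$, the target operator has the required polynomial shape with $\Psi$ right-invertible, and each coefficient $f_{i}$ belongs to $C_{\ell''}^{\ell'}(P;\mathbb{K})$.

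A direct application of Lemma \ref{lemma_4_1} then finishes the proof. The only piece of bookkeeping is the starting index: the evaluation map admits a constant term $f_{0}\cdot I$, which is absent from the sum in the original statement of Lemma \ref{lemma_4_1}. This term presents no obstacle and is handled exactly by the machinery already used inside the proof of Lemma \ref{lemma_4_1}: since $I$ is right-invertible, $f_{0} \in C_{\ell''}^{\ell'}(P;\mathbb{K})$, and the calculus is unital, Lemma \ref{lemma_2} produces emergence of $S_{\varepsilon(\ell)}$ from $f_{0}\cdot I$, and Lemma \ref{lemma_3} glues this to the $i \geq 1$ contribution supplied by Lemma \ref{lemma_4_1}. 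The potential ``hard part'' here is purely verificational, i.e.\ checking that the new notation has been set up so as to reproduce the old hypotheses faithfully; there is no genuine analytic or combinatorial difficulty beyond what Lemma \ref{lemma_4_1} has already resolved.
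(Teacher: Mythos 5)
Your proposal is correct and matches the paper exactly: the paper offers no separate proof of the revisited lemma, presenting it as a direct repackaging of Lemma \ref{lemma_4_1} once the evaluation map and the coherence condition are unpacked, which is precisely what you do. Your handling of the $i=0$ constant term (via the unital calculus, Lemma \ref{lemma_2} and Lemma \ref{lemma_3}) is a small index discrepancy the paper silently glosses over, and you patch it correctly.
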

\begin{rem}
From Remark \ref{induced_functional_calculus} it follows that if
$C_{\ell';l}(P;\mathbb{K})$ is coherent in $\mathcal{NB}^{\ell}$,
then $C_{\ell';l'}(P;\mathbb{K})$ is also coherent for every $l'\leq l$. 
\end{rem}

\section{Fifth Step \label{step_5}}

$\quad\;\,$The final step before proving the main result is a recurrence
lemma which is obtained as a consequence of Lemma \ref{lemma_3}-\ref{lemma_4}.
Recall that $\operatorname{Nice}(E)$ is an algebra and, as in any
algebra $A$ we can ask if a given element $a\in A$ divides from
the left (resp. from the right) another element $b\in A$, meaning
that there exists $q\in A$, called the \emph{quotient between $b$
and $a$}, such that $b=a*q$ (resp $b=q*a$). In $\operatorname{Nice}(E)$,
given two operators $\Psi_{2}$ and $\Psi_{3}$, this means that there
exists a third $Q\in\operatorname{Nice}(E)$ such that $\Psi_{2}=\Psi_{3}*Q$
(resp. $\Psi_{2}=Q*\Psi_{3}$). Given two NPT $S_{2,\delta(\ell')}$
and $S_{3,\kappa(\ell'')}$, with respective parameterized operators
$\Psi_{2,\delta(\ell')}$ and $\Psi_{3,\kappa(\ell'')}$, let us say
that $S_{2,\delta(\ell')}$ is \emph{divisible from the left (resp.
from the right)} by $S_{3,\kappa(\ell'')}$ if for every $\delta(\ell')$
there exists $\kappa(\ell'')$ such that $\Psi_{3,\kappa(\ell'')}$
divides $\Psi_{2,\delta(\ell')}$ from the left (resp. from the right),
so that $\Psi_{2,\delta(\ell')}=\Psi_{3,\kappa(\ell'')}\circ Q_{\delta(\ell'),\kappa(\ell'')}$
(resp. $\Psi_{2,\delta(\ell')}=Q_{\delta(\ell'),\kappa(\ell'')}\circ\Psi_{3,\kappa(\ell'')})$. 
\begin{lem}
\label{lemma_5} Let $S_{1,\varepsilon(\ell)}$ be a homomorphic or
partially homomorphic NPT defined in a nice background $\mathcal{NB}_{\ell'}^{\ell}$
and whose parameterized operator is $\Psi_{1,\varepsilon(\ell)}$.
Let Given $l\geq1$, let $S_{j,\delta_{j}(\ell')}$ and $S_{k,\kappa_{k}(\ell')}$,
with $1\leq j,k\leq l$ be two families of NPT, also defined in $\mathcal{NB}_{\ell''}^{\ell}$.
Let $C_{\ell''}(P;\mathbb{K})$ be the domain of a functional calculus
in $\mathcal{NB}_{\ell''}^{\ell}$. Assume that:
\begin{enumerate}
\item $S_{1,\varepsilon(\ell)}$ emerges from $S_{2_{j},\delta_{j}(\ell')}$
and from $S_{3_{k},\kappa_{k}(\ell')}$ for every $j,k$;
\item $S_{2_{j},\delta_{j}(\ell')}$ emerges from $S_{3_{k},\kappa_{k}(\ell')}$
is $k=j$;
\item for every $2\leq k\leq l$ the theory $S_{3_{k},\kappa_{k}(\ell')}$
is divisible from the right by a monomial $m_{\kappa_{k}(\ell')}^{d(k)}=g_{k}(\kappa_{k}(\ell'))\Psi^{d(k)}$,
where $\Psi$ is right-invertible and $g_{k}\in C_{\ell''}^{\ell'}(P;\mathbb{K})$
so that $\Psi_{3_{k},\kappa_{k}(\ell')}=Q_{k}\circ m_{\kappa_{k}(\ell')}^{d(k)}$;
\item for every $1\leq m\leq l-1$ the theories $S_{\delta_{J},\kappa_{J}}^{m}=\sum_{j=1}^{m}(S_{2_{j},\delta_{j}}\circ S_{3_{j},\kappa_{j}})$
and $S_{2_{m+1},\delta_{m+1}}$ emerges from $Q_{m+1}$;
\item for every $1\leq m\leq l-1$ the theory $S_{\delta_{J},\kappa_{J}}^{m}$
emerges from $S_{2_{m+1},\delta_{m+1}}$.
\end{enumerate}
Then $S_{1,\varepsilon(\ell)}$ emerges from $S_{\delta_{J},\kappa_{J}}^{l}$.
\end{lem}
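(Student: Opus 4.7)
The plan is to proceed by induction on $l$, the number of summands in $S_{\delta_{J},\kappa_{J}}^{l}$.

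For the base case $l=1$, we have $S_{\delta_{J},\kappa_{J}}^{1}=S_{2_{1},\delta_{1}}\circ S_{3_{1},\kappa_{1}}$. Since $S_{1,\varepsilon(\ell)}$ is (partially) homomorphic on parameters it is, in particular, (partially) multiplicative. Hypothesis 1 (with $j=k=1$) gives emergence of $S_{1}$ from both $S_{2_{1}}$ and $S_{3_{1}}$, hypothesis 2 (with $j=k=1$) gives emergence of $S_{2_{1}}$ from $S_{3_{1}}$, and the background $\mathcal{NB}_{\ell''}^{\ell}$ has degree $\ell$ square roots. Thus Lemma \ref{lemma_4} applies directly and yields emergence of $S_{1}$ from $S_{2_{1}}\circ S_{3_{1}}$.

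For the inductive step, suppose $S_{1}$ emerges from $S^{m}:=\sum_{j=1}^{m}(S_{2_{j}}\circ S_{3_{j}})$. Writing $S^{m+1}=S^{m}+(S_{2_{m+1}}\circ S_{3_{m+1}})$, I would apply Lemma \ref{lemma_3} to this sum, whose hypotheses decompose as: (a) $S_{1}$ is (partially) additive, which follows from the homomorphism hypothesis; (b) $S_{1}$ emerges from each of the two summands, where emergence from $S^{m}$ is the induction hypothesis and emergence from $S_{2_{m+1}}\circ S_{3_{m+1}}$ is obtained exactly as in the base case, now at index $m+1$; and (c) one summand emerges from the other. Step (c) is precisely where hypotheses 3, 4 and 5 enter.

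For (c) I would show that $S^{m}$ emerges from $S_{2_{m+1}}\circ S_{3_{m+1}}$. By hypothesis 3, the operator of $S_{2_{m+1}}\circ S_{3_{m+1}}$ factors as $\Psi_{2_{m+1},\delta_{m+1}}\circ Q_{m+1}\circ m_{\kappa_{m+1}}^{d(m+1)}$, with $m_{\kappa_{m+1}}^{d(m+1)}=g_{m+1}(\kappa_{m+1})\Psi^{d(m+1)}$, $\Psi$ right-invertible and $g_{m+1}\in C_{\ell''}^{\ell'}(P;\mathbb{K})$. Lemma \ref{lemma_2} already gives emergence of $S^{m}$ from the monomial factor $m_{\kappa_{m+1}}^{d(m+1)}$; hypothesis 4 supplies the emergences of $S^{m}$ (and of $S_{2_{m+1}}$) from $Q_{m+1}$; and hypothesis 5 supplies emergence of $S^{m}$ from $S_{2_{m+1}}$. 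Transporting all these to the level of extended operators via Lemma \ref{key_remark}, I would use the right-invertibility of $\Psi$ and the functional calculus on $C_{\ell''}(P;\mathbb{K})$ to match parameters and reconstruct $\tilde{\Psi}_{S^{m},\delta_{J},\kappa_{J}}$ as a composition $\tilde{\Psi}_{2_{m+1},*}\circ\tilde{Q}_{m+1,*}\circ\tilde{m}_{*}^{d(m+1)}$ depending on $(\delta_{J},\kappa_{J})$. The main obstacle is precisely this step: since $S^{m}$ is a sum of compositions and is not known to be (partially) multiplicative, Lemma \ref{lemma_4} cannot be invoked to assemble the three factors into a composition, so one must argue by hand, cascading the three emergences through right-inverses and the functional calculus. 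Once (c) is in place, Lemma \ref{lemma_3} yields emergence of $S_{1}$ from $S^{m+1}$, closing the induction.
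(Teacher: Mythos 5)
Your induction scheme, your base case, and your reduction of the inductive step to showing that $S_{\delta_J,\kappa_J}^{m}$ emerges from $S_{2_{m+1},\delta_{m+1}}\circ S_{3_{m+1},\kappa_{m+1}}$ all coincide with the paper's proof, and you correctly identify hypotheses 3--5 as the inputs for that remaining step. The gap is that you stop there: you list the three available emergences (from the monomial via Lemma \ref{lemma_2}, from $Q_{m+1}$ via hypothesis 4, and from $S_{2_{m+1},\delta_{m+1}}$ via hypothesis 5) and then defer to an unspecified ``by hand'' cascade through right-inverses and the functional calculus, which is not an argument. The paper closes this step with two further applications of Lemma \ref{lemma_4}: first, since $\Psi$ is right-invertible and $g_{m+1}\in C_{\ell''}^{\ell'}(P;\mathbb{K})$, Lemma \ref{lemma_2} makes $Q_{m+1}$, $S_{\delta_J,\kappa_J}^{m}$ and $S_{2_{m+1},\delta_{m+1}}$ all emerge from the monomial $m_{\kappa_{m+1}(\ell')}^{d(m+1)}$, so hypothesis 4 together with Lemma \ref{lemma_4} gives emergence of $S_{\delta_J,\kappa_J}^{m}$ and of $S_{2_{m+1},\delta_{m+1}}$ from $S_{3_{m+1},\kappa_{m+1}}=Q_{m+1}\circ m_{\kappa_{m+1}(\ell')}^{d(m+1)}$; second, hypothesis 5 together with these two emergences and Lemma \ref{lemma_4} again gives emergence of $S_{\delta_J,\kappa_J}^{m}$ from $S_{2_{m+1},\delta_{m+1}}\circ S_{3_{m+1},\kappa_{m+1}}$, after which Lemma \ref{lemma_3} finishes the induction exactly as you describe.

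Your stated reason for refusing to invoke Lemma \ref{lemma_4} at this point --- that $S_{\delta_J,\kappa_J}^{m}$, being a sum of compositions, is not known to be multiplicative on its parameters --- is a fair observation, and the paper does not verify that hypothesis either when it places $S_{\delta_J,\kappa_J}^{m}$ (and $S_{2_{m+1},\delta_{m+1}}$) in the role of the emerging theory of Lemma \ref{lemma_4}. But flagging a difficulty is not the same as resolving it: hypotheses 4 and 5 of the statement were manifestly designed to be fed into Lemma \ref{lemma_4}, and as submitted your proposal neither carries out that route nor substitutes a working alternative, so the decisive step of the induction remains unproved.
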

\begin{proof}
We proceed by induction in $l$. First of all, notice that from the
first two hypotheses and from Lemma \ref{lemma_4} we see that $S_{1,\varepsilon(\ell)}$
emerges from the composition $S_{2_{j},\delta_{j}}\circ S_{3_{j},\kappa_{j}}$
for every $j=1,...,l$. In particular, it emerges from $S_{\delta_{1},\kappa_{1}}^{1}=S_{2_{1},\delta_{1}}\circ S_{3_{1},\kappa_{1}}$,
which is the base of induction. For every $m=1,...,l-1$ it also emerges
from $S_{2_{m+1},\delta_{m+1}}\circ S_{3_{m+1},\kappa_{m+1}}$. For
the induction step, suppose that $S_{1,\varepsilon(\ell)}$ emerges
from $S_{\delta_{J},\kappa_{J}}^{m}=\sum_{j=1}^{m}S_{2_{j},\delta_{j}}\circ S_{3_{j},\kappa_{j}}$
for every $1\leq m\leq l-1$ and let us show that it emerges from
$S_{\delta_{J},\kappa_{J}}^{m+1}$. Notice that 
\begin{eqnarray*}
S_{\delta_{J},\kappa_{J}}^{m+1} & = & \sum_{j=1}^{m+1}S_{2_{j},\delta_{j}}\circ S_{3_{j},\kappa_{j}}=\sum_{j=1}^{m}(S_{2_{j},\delta_{j}}\circ S_{3_{j},\kappa_{j}})+S_{2_{m+1},\delta_{m+1}}\circ S_{3_{m+1},\kappa_{m+1}}\\
 & = & S_{\delta_{J},\kappa_{J}}^{m}+S_{2_{m+1},\delta_{m+1}}\circ S_{3_{m+1},\kappa_{m+1}}.
\end{eqnarray*}
From the induction hypothesis $S_{1,\varepsilon(\ell)}$ emerges from
$S_{\delta_{J},\kappa_{J}}^{m}$, while by the above it also emerges
from $S_{2_{m+1},\delta_{m+1}}\circ S_{3_{m+1},\kappa_{m+1}}$. Thus,
if $S_{\delta_{J},\kappa_{J}}^{m}$ emerges from $S_{2_{m+1},\delta_{m+1}}\circ S_{3_{m+1},\kappa_{m+1}}$
we can use Lemma \ref{lemma_3} to conclude that $S_{1,\varepsilon(\ell)}$
emerges from $S_{\delta_{J},\kappa_{J}}^{m+1}$. From Lemma \ref{lemma_2}
we know that $Q_{m+1}$, $S_{\delta_{J},\kappa_{J}}^{m}$ and $S_{2_{m+1},\delta_{m+1}}$
emerge from $m_{\kappa_{m_{+1}}(\ell')}^{d(m+1)}$. Due to the fourth
hypothesis, by Lemma \ref{lemma_4} we see that $S_{2_{m+1},\delta_{m+1}}$
and $S_{\delta_{J},\kappa_{J}}^{m}$ emerge from the composition $S_{3_{m+1},\kappa_{m+1}(\ell')}=Q_{m+1}\circ m_{\kappa_{m+1}(\ell')}^{d(m+1)}$.
Therefore, if we prove that $S_{\delta_{J},\kappa_{J}}^{m}$ also
emerges from $S_{2_{m+1},\delta_{m+1}}$, then Lemma \ref{lemma_4}
will imply that it emerges from $S_{2_{m+1},\delta_{m+1}}\circ S_{3_{m+1},\kappa_{m+1}}$,
as desired. But this remaining condition is precisely the fifth hypothesis. 
\end{proof}

\section{The Theorem \label{sec_theorem}}

$\quad\;\,$Let $r\geq1$ be a positive integer and let 
\[
\operatorname{Map}(\operatorname{Par}(P)^{\ell'};\mathbb{K})[x_{1},...,x_{r}]
\]
be the polynomial ring in variables $x_{1},...,x_{r}$. In analogy
to (\ref{evaluation}) we have an evaluation map 
\begin{equation}
ev_{\ell';l;r}:\operatorname{Map}_{l}(\operatorname{Par}(P)^{\ell'};\mathbb{K})[x_{1},...,x_{r}]\times\operatorname{Nice}(E)^{r}\rightarrow\operatorname{Nice}_{\mathbb{K}}^{\ell'}(E).\label{multivariate_evaluation}
\end{equation}

\begin{itemize}
\item We ask: \emph{can we find subsets $X_{\ell';l}^{r}(P,E)=C_{\ell';l;r}(P)\times\underline{N}^{r}(E)$
of multivariate polynomials and operators such that Lemma} \ref{lemma_4_1_revised}
\emph{holds if we replace the image of $\underline{ev}_{\ell';l}^{C}$
with the image of $ev_{\ell';l;r}$ by $X_{l;r}^{\ell}(P,E)$?}
\end{itemize}
$\quad\;\,$Since we added a new integer index ``$r$'' and since
the desired property holds in the case $r=1$, it is natural to try
to use induction arguments. In an induction argument it is highly
desirable that the induction step can be set in connection with the
base of induction. Recall that for any commutative ring $R$, we have
an isomorphism of graded algebras $R[x,y]\simeq R[x][y]$, so that
$R[x_{1},...,x_{r}]\simeq R[x_{1},...,x_{r-1}][x_{r}]$, which allows
us to work in the nice setting for induction described above. The
obvious idea is to try to take $\underline{N}^{r}(E)=\underline{\operatorname{Nice}}(E)^{r}$,
since for $r=1$ we recover the set of operators used in Lemma \ref{lemma_4_1_revised}.
Furthermore, given $C_{\ell'}(P;\mathbb{K})$ as previously, we can
consider the $C_{\ell';l;r}(P)$ as the subset $C_{\ell';l;r}(P)\subset\operatorname{Map}_{l}(\operatorname{Par}(P)^{\ell'};\mathbb{K})[x_{1},...,x_{r}]$
of multivariate polynomials whose coefficients are in $C_{\ell';l;r}(P)$.
The restriction of $ev_{l;r}$ to those subsets will be denoted by
\[
\underline{ev}_{\ell';l;r}^{C}:C_{\ell';l;r}(P)\times\underline{N}^{r}(E)\rightarrow\operatorname{Nice}_{\mathbb{K}}^{\ell}(E).
\]

We can now restate and prove our main theorem.
\begin{thm}
\label{final_them} Let $S_{1,\varepsilon(\ell)}$ be a homomorphic
or partially homomorphic on parameters NPT, defined in nice background
$\mathcal{NB}_{\ell''}^{\ell}$. Let $C_{\ell''}(P;\mathbb{K})$ be
the domain of a unital functional calculus in $\mathcal{NB}_{\ell''}^{\ell}$.
Then $S_{1,\varepsilon(\ell)}$ emerges from any NPT in the image
of $\underline{ev}_{\ell';l;r}^{C}$, where $\ell''=rl\ell'$.
\end{thm}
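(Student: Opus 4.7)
The plan is to induct on the number of variables $r$. The base case $r=1$ is exactly Lemma~\ref{lemma_4_1_revised} (the fourth step), which already handles degree-$\le l$ polynomials in a single right-invertible operator with coefficients drawn from a coherent family in a background $\mathcal{NB}^{\ell}_{\ell''}$ with $\ell''=l\ell'$.

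For the inductive step I would exploit the algebra isomorphism $R[x_1,\ldots,x_r]\simeq R[x_1,\ldots,x_{r-1}][x_r]$ to reorganize any polynomial in $r$ variables as
\[
p_{\ell'}^{l}[\Psi_1,\ldots,\Psi_r] \;=\; A_0[\Psi_1,\ldots,\Psi_{r-1}] \,+\, \sum_{j=1}^{l} A_j[\Psi_1,\ldots,\Psi_{r-1}] \circ \Psi_r^{j},
\]
where each $A_j$ is a polynomial of degree $\le l-j$ in the remaining $r-1$ variables, with coefficients that lie in a restriction of $C_{\ell''}(P;\mathbb{K})$ compatible with the coherence of Definition~\ref{def_type_I}. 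The inductive hypothesis then gives emergence of $S_{1,\varepsilon(\ell)}$ from each $A_j$, and Lemma~\ref{lemma_2} applied with base operator $\Psi_r$ (which is right-invertible, so all its powers are) gives emergence from each monomial $\Psi_r^{j}$.

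To glue the partial emergences into emergence from the whole polynomial I would invoke the recurrence Lemma~\ref{lemma_5} on the sum $\sum_{j=1}^l A_j\circ \Psi_r^j$, with $S_{2_j}$ and $S_{3_j}$ chosen so that their composition realizes the $j$-th layer $A_j\circ\Psi_r^{j}$. The common right-invertible operator in Lemma~\ref{lemma_5}'s divisibility hypothesis is taken to be $\Psi=\Psi_r$, and the unital hypothesis on the functional calculus guarantees that the constant function $1$ lies in $C_{\ell''}^{\ell'}(P;\mathbb{K})$, which is exactly what is needed to write each layer as a monomial factor $1\cdot\Psi_r^{j}$ times its quotient. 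The auxiliary cross-emergences required by hypotheses $(4)$ and $(5)$ of Lemma~\ref{lemma_5} are established by iterating the inductive hypothesis together with Lemmas~\ref{lemma_2}--\ref{lemma_4}, using that $S_{1,\varepsilon(\ell)}$ is homomorphic on parameters. The leftover constant term $A_0$, which falls outside the $j\ge 1$ range of Lemma~\ref{lemma_5}'s sum, is absorbed by a final application of the additivity Lemma~\ref{lemma_3}, with Lemma~\ref{lemma_2} supplying the auxiliary emergence needed to trigger it.

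The main obstacle I expect is the parameter-degree and functional-calculus bookkeeping. The identity $\ell''=rl\ell'$ is precisely the assertion that each of the $r$ Horner-type reductions introduces a fresh block of $l\ell'$ parameters (one $\ell'$-tuple for each of the $l$ nontrivial powers of the peeled variable), and at every inductive stage one must verify that Definition~\ref{def_type_I}'s coherence condition and the inherited functional calculi from Remark~\ref{induced_functional_calculus} persist when one passes from degree $\ell''$ down to degree $(r-1)l\ell'$ and further isolates each coefficient's degree-$\ell'$ slot. Once this bookkeeping is set up, the inductive step reduces cleanly to Lemma~\ref{lemma_5}, and the theorem follows by finite induction on $r$.
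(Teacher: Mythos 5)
Your proposal follows essentially the same route as the paper: induction on $r$ via the isomorphism $R[x_1,\ldots,x_r]\simeq R[x_1,\ldots,x_{r-1}][x_r]$, reduction of each layer $A_j\circ\Psi_r^{j}$ to the recurrence Lemma~\ref{lemma_5} with $\Psi_r$ as the common right-invertible factor and unitality supplying the monomials $1\cdot\Psi_r^{j}$, and the same parameter-degree bookkeeping giving $\ell''=rl\ell'$. The only place you are vaguer than the paper is the verification of hypothesis (5) of Lemma~\ref{lemma_5} (that the partial sums of layers emerge from the next coefficient polynomial), which the paper settles by a further reverse induction on the number of variables down to the first-order univariate case of Lemma~\ref{lemma_4_1_revised}; your ``iterating the inductive hypothesis together with Lemmas~\ref{lemma_2}--\ref{lemma_4}'' is the same idea stated without the explicit descent.
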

\begin{proof}
After all these steps and the discussion above, the proof is almost
straightforward. It is done by induction in $r$. The case $r=1$
is just Lemma \ref{lemma_4_1_revised}. Suppose that the theorem holds
for each $r=1,...,q$ and let us show that it holds for $r=q+1$.
Let $p_{\ell';q+1}^{l}[x_{1},...,x_{r+1}]=\sum_{\vert\alpha\vert\leq l}f_{\alpha}\cdot x^{\alpha}$
be a multivariate polynomial whose coefficients $f_{\alpha}$ belong
to $C_{\ell''}^{\ell'}(P;\mathbb{K})$. From the isomorphism $R[x_{1},...,x_{q+1}]\simeq R[x_{1},...,x_{q}][x_{q+1}]$
one can regard $p_{\ell';q+1}^{l}[x_{1},...,x_{r+1}]$ as a univariate
polynomial $p_{\ell'}^{l_{q+1}}[x_{q+1}]$ for some $1\leq l_{q+1}\leq l$.
Thus, $p_{\ell';q+1}^{l}[x_{1},...,x_{r+1}]=\sum_{i}g_{i;r}[x_{1},...,x_{q}]\cdot x_{q+1}^{i}$,
where $g_{i;r}\in C_{\ell';l_{i};r}(P;\mathbb{K})$ and $1\leq l_{i}\le l$
such that $\sum_{i=1}^{q+1}l_{i}=l$. Let $\Psi_{1},...,\Psi_{q+1}\in\underline{\operatorname{Nice}}(E)$
be right-invertible operators. Notice that 
\begin{equation}
\underline{ev}_{\ell';l;q+1}^{C}(p_{\ell';q+1}^{l}[x_{1},...,x_{q+1}],\Psi_{1},...,\Psi_{q+1})=\sum_{i=1}^{l_{q+1}}\underline{ev}_{\ell';l_{i};q}^{C}(g_{i;q}[x_{1},...,x_{r}],\Psi_{1},...,\Psi_{q})\circ m_{i}[\Psi_{q+1}],\label{theorem_1}
\end{equation}
where $m_{i}[x]$ is the monomial $m_{i}[x]=1\cdot x^{i}$. Let us
define $S_{3_{k},\kappa_{k}}$ as the theory with operators $\Psi_{3_{k},\kappa_{k}}=Q_{k}\circ m_{k}[\Psi_{r+1}]$,
where $Q_{k}=I$ for every $i=1,...,l_{r+1}$. Furthermore, let $S_{2_{j},\delta_{j}}$
be the theory defined by the operators $g_{j;q}[\Psi_{1},...,\Psi_{q}]$.
By the induction hypothesis, $S_{1,\varepsilon(\ell)}$ emerges from
$S_{2_{j},\delta_{j}}$ for every $j$. On the other hand, since $C_{\ell''}(P;\mathbb{K})$
is the domain of a unital functional calculus, from Lemma \ref{lemma_2}
we see that $S_{1,\varepsilon(\ell)}$ and $S_{2_{j},\delta_{j}}$
emerges from $\Psi_{3_{k},\kappa_{k}}$ for every $k$. Thus, the
first two conditions of Lemma \ref{lemma_5} are satisfied. Since
$\Psi_{3_{k},\kappa_{k}}=Q_{k}\circ m_{k}[\Psi_{r+1}]$ it is also
clear that the third hypothesis is also satisfied. The fourth one
follows from Lemma \ref{lemma_2}. Therefore, if the fifth one holds,
then we can apply Lemma \ref{lemma_5} to conclude that $S_{1,\varepsilon(\ell)}$
emerges from (\ref{theorem_1}), concluding the proof. Notice that
the fifth condition, applied to this context, means that 
\[
\Xi_{q+1}=\sum_{i=1}^{l_{q+1}-1}g_{i;q}[\Psi_{1},...,\Psi_{q}]\circ m_{i}[\Psi_{q+1}]
\]
emerges from $\Xi_{q}=g_{l_{q+1};q}[\Psi_{1},...,\Psi_{q}]$. Notice
that $\Xi_{q}=g_{l_{q+1};q}[\Psi_{1},...,\Psi_{q}]=\sum_{j=1}^{l(q)}\Xi_{j;q-1}\circ m_{j}[\Psi_{q}]$,
where $\Xi_{j;q-1}=h_{j;q-1}[\Psi_{1},...,\Psi_{q-1}]$, so that due
to Lemma \ref{lemma_2} $\Xi_{q}$ and $\Xi_{j;q-1}$ emerges from
$m_{j}[\Psi_{q}]$. Thus, from Lemma \ref{lemma_4} if $\Xi_{q+1}$
emerges from $\Xi_{j;q-1}$, then $\Xi_{q+1}$ emerges from $\Xi_{j;q-1}\circ m_{j}[\Psi_{q}]$.
If, in addition $\Xi_{q}$ emerges from $\Xi_{l(q);q-1}$, then we
can use Lemma \ref{lemma_5} to conclude that $\Xi_{q+1}$ emerges
from $\Xi_{q}$, which by the above will finish the proof. Thus, the
real problem is to prove that $\Xi_{q+1}$ emerges from $\Xi_{j;q-1}$
and that $\Xi_{q}$ emerges from $\Xi_{l(q);q-1}$. We can repeat
the argument to show that all we actually need to prove is that $\Xi_{q}$
emerges from $\Xi_{j;q-2}$ and that $\Xi_{q-1}$ emerges from $\Xi_{l(q-1);q-2}$,
where 
\[
\Xi_{q-1}=t_{l(q);q-1}[\Psi_{1},...,\Psi_{q-1}]=\sum_{j=1}^{l(q-1)}\Xi_{j;q-2}\circ m_{j}[\Psi_{q-1}].
\]
Thus, by means of using a reverse induction, now in $q$, we conclude
that it is enough to ensure that $\Xi_{3}$ emerges from $\Xi_{j;1}$
and that $\Xi_{2}$ emerges from $\Xi_{l(1);1}$. But $\Xi_{j;1}$
and $\Xi_{l(1);1}$ are first order univariate polynomials evaluated
in a right-invertible operator. Thus, the result follows from Lemma
\ref{lemma_4_1_revised}.
\end{proof}

\section{\label{sec_examples} Some Examples}

Although this paper is focused on finding general conditions for the
existence of emergence phenomena, in this section we present some
concrete examples aiming to make more clear the real range of our
results. We begin with the basic class of examples, which in the sequence
will be generalized in many directions.
\begin{example}
\label{main_example} Consider the nice background defined by:
\begin{enumerate}
\item \emph{some bounded open set $U\subset\mathbb{R}^{n}$} with the canonical
Riemannian metric (regarded as the spacetime);
\item \emph{the trivial bundle $U\times\mathbb{K}$} (regarded as the field
bundle) with the global sections $C^{\infty}(U;\mathbb{K})$ (regarded
as the fields $\varphi:U\rightarrow\mathbb{K}$);
\item \emph{a number $p\in[1,\infty]$}, corresponding to the integrability
degree;
\item \emph{the graded Hilbert space}\footnote{Recall that the category of Hilbert spaces is closed under direct
sums (coproducts).}\emph{ $W^{p}(U)=\bigoplus_{k\geq0}W^{k,p}(U;\mathbb{K})$}, playing
the role of $\mathcal{H}(E)$;
\item \emph{again the trivial bundle $U\times\mathbb{K}$}, now regarded
as the parameter bundle;
\item \emph{the subspace $cst\subset C^{\infty}(U;\mathbb{K})$ of constant
functions} (viewed as the parameter set), so that $\operatorname{Par}(U\times\mathbb{K})\simeq\mathbb{K}$,
endowed with the canonical algebra structure.
\end{enumerate}
In this nice background, take a NPT defined by:
\begin{enumerate}
\item \emph{degree $\ell=1$}. Thus, $\varepsilon(\ell)$ belongs to $\operatorname{Par}(U\times\mathbb{K})^{\ell}\simeq\mathbb{K}$,
which obviously has degree $\ell=1$ square roots. In the following,
for simplicity we will write just $\varepsilon\in\mathbb{K}$ instead
of $\varepsilon(1)\in\operatorname{Par}(U\times\mathbb{K})^{1}$;
\item \emph{a differential operator $D_{0}:C^{k}(U;\mathbb{K})\rightarrow C^{k-d}(U;\mathbb{K})$
in $U$}, where $0\leq d\leq k$ is its degree. This has its canonical
extension as a bounded operator between Sobolev spaces $\hat{D}_{0}:W^{k,p}(U;\mathbb{K})\rightarrow W^{k-d,p}(U;\mathbb{K})$,
which in turn extends to a bounded operator $\Psi_{0}:W^{p}(U)\rightarrow W^{p}(U)$.
Indeed, let $0_{k;d}:W^{k-d,p}(U;\mathbb{K})\rightarrow W^{k,p}(U;\mathbb{K})$
and $0_{l}:W^{l,p}(U;\mathbb{K})\rightarrow W^{l,p}(U;\mathbb{K})$
be the null operators. Then $\bigoplus_{l\neq k-d}(\hat{D}_{0}\oplus0_{k;d}\oplus0_{l})$
is the desired extension;
\item \emph{the parameterized} \emph{operator $\Psi_{1,\varepsilon}=\varepsilon\Psi_{0}$}.
Notice that the rule $\varepsilon\mapsto\Psi_{1,\varepsilon}$ is
additive and multiplicative, so that the corresponding NPT theory
is homomorphic.
\end{enumerate}
Consider, in addition:
\begin{itemize}
\item \emph{a list $D_{2,1},...,D_{2,r}:C^{\infty}(U;\mathbb{K})\rightarrow C^{\infty}(U;\mathbb{K})$
of smooth linear different operators in $U$ with constant coefficients}.
This means that they have fundamental solutions, i.e, Green functions,
and assume that these are defined in the whole $U$ (this usually
implies constrains on $U$ \citep{right_inverse_const_1,right_inverse_const_2}).
By the above, the operator $D_{2,i}$, with $i=1,...,r$, extend to
bounded operators $\Psi_{2,i}:W^{p}(U)\rightarrow W^{p}(U)$, which
have right-inverses (as pseudo-differential operators) determined
by the Green functions. Let $p^{l}[\Psi_{2,1},...,\Psi_{2,r};\delta]=\sum_{\vert\alpha\vert\leq l}f_{\alpha}(\delta)\Psi_{2}^{\alpha}$,
where $\Psi_{2}^{\alpha}=\Psi_{2,1}^{\alpha_{1}}\circ...\circ\Psi_{2,r}^{\alpha_{r}}$
and $\alpha_{1}+...+\alpha_{r}=\vert\alpha\vert$, be some multivariate
polynomial of degree $l$ whose coefficients are nowhere vanishing
functions $f_{\alpha}:\mathbb{R}\rightarrow\mathbb{R}$ depending
on $\delta$. Denote $D_{2}^{\alpha}=D_{2,1}^{\alpha_{1}}\circ...\circ D_{2,r}^{\alpha_{r}}$.
\end{itemize}
From Theorem \ref{final_them} it then follows that the parameterized
theory with Lagrangian density $\mathcal{L}_{1}(\varphi;\varepsilon)=\varepsilon\varphi^{*}\Psi_{0}\varphi$
emerges from the theory with Lagrangian density 
\[
\mathcal{L}_{2}(\varphi;\delta)=\varphi^{*}p^{l}[\Psi_{2,1},...,\Psi_{2,r};\delta]\varphi=\sum_{\vert\alpha\vert\leq l}f_{\alpha}(\delta)\varphi^{*}\Psi^{\alpha}\varphi.
\]
Furthermore, the theory $\mathcal{L}_{1}(\varphi;\varepsilon)=\varepsilon\varphi^{*}D_{0}\varphi$
also emerges from $\mathcal{L}_{2}(\varphi;\delta)=\sum_{\vert\alpha\vert\leq l}f_{\alpha}(\delta)\varphi^{*}D_{2}^{\alpha}\varphi$.
Here, $\cdot^{*}:\mathbb{K}\rightarrow\mathbb{K}$ is the obvious
involution given by $z^{*}=z$ if $z\in\mathbb{R}$ and $z^{*}=\overline{z}$
if $z\in\mathbb{C}$.
\end{example}
The main conclusion of the above example is the following:$\underset{\underset{\;}{\;}}{\;}$

\noindent \textbf{Conclusion 1.} In a nice open \emph{Euclidean background}
the typical real/complex kinetic scalar theories with scalar parameter
emerge from any multivariate real/complex polynomial scalar theory
with scalar parameter and \emph{constant coefficients}.$\underset{\underset{\;}{\;}}{\;}$

The previous class of examples is the synthesis of how we can use
our abstract and general theorem in order to get concrete information
which is closer to Physics. We note, however, that it can be generalized
in many directions:
\begin{enumerate}
\item \emph{we do not need to assume that the differential operators $D_{2,i}:C^{\infty}(U;\mathbb{K})\rightarrow C^{\infty}(U;\mathbb{K})$
are smooth}. Indeed, we made this assumption only in order to simplify
the notation. In the general situation we could consider functions
$\kappa:[1,r]\rightarrow(0,\infty)$ and $\Delta:[1,r]\rightarrow[0,\infty)$,
with $\Delta(i)\leq\kappa(i)$, and then work on operators $D_{2,i}:C^{\kappa(i)}(U;\mathbb{K})\rightarrow C^{\kappa(i)-\Delta(i)}(U;\mathbb{K})$;
\item \emph{we can work on a more general nice background. }Notice that
the constructions in Example \ref{main_example} are about building
Sobolev spaces and considering extensions of differential operators
to them. For the last one it is implicit the fact that the closure
of smooth functions is equivalent to the Sobolev space. All of this
is true in any compact Riemannian manifold $(M,g)$ and for differential
operators between vector bundles over $M$ \citep{sobolev_manifolds_1,sobolev_manifolds_2,sobolev_manifolds_3}.
This is true even in the case of noncompact manifolds with smooth
boundary, but now under some geometric assumption: if $(M,g)$ has
$k$-bounded geometry, then we have denseness of smooth functions
on $W^{r,p}(M;\mathbb{R})$ for $1\leq r\leq k+2$ \citep{sobolev_manifolds_1,sobolev_manifolds_2,sobolev_manifolds_3}
- see also Footnote \ref{nao_precisa_compact}. Physically, this means
that Conclusion 1 remains true in the case of \emph{Euclidean} backgrounds
with nonzero curvature and of vector or tensor fields instead of scalar
ones;
\item \emph{we can consider other kind of parameters}. In Example \ref{main_example}
we considered $\ell=1$ and $\operatorname{Par}(P)\simeq\mathbb{R}$.
We could considered, more generally, $\operatorname{Par}(P)$ as any
algebra with square roots continuously acting on $\Gamma(E)$, where
$E$ is the field bundle (some vector bundle due the last remark).
Indeed, in this case, due to the denseness of $\Gamma(E)$ on the
Sobolev space $W^{p}(E;\mathbb{K})$ we have an induced action $*\operatorname{Par}(P)\times W^{p}(E;\mathbb{K})\rightarrow W^{p}(E;\mathbb{K})$,
which itself induces an action $*_{B}$ on $B(W^{p}(E;\mathbb{K}))$
given by $(\varepsilon*_{B}\Psi)(\varphi)=\varepsilon*(\Psi\varphi)$.
Furthermore, for every fixed $\Psi_{0}$, the rule $\varepsilon\mapsto\Psi_{1,\varepsilon}=\varepsilon*_{B}\Psi_{0}$
is homomorphic, which is precisely what we need. This includes, for
instance, the case of the nowhere vanishing algebra of positive self-adjoint
real/complex matrices, which can be realized as a nowhere vanishing
subalgebra of $\Gamma(P)$, where $P$ is the algebra bundle $P\times B(\mathbb{K}^{m})$
and $m=\operatorname{rank}(E)$. Physically, this means that in Conclusion
1 we can replace scalar parameter with matrix parameter or more general
and abstract things;
\item \emph{we can consider other kinds of parameterized operators. }In
the above remark, we consider only parameterized operators given by
$\Psi_{1,\varepsilon}=\varepsilon*_{B}\Psi_{0}$, i.e, with a uniform
scaling dependence on $\varepsilon$. More generally, we can take
any representation $\rho_{0}:\operatorname{Par}(P)\rightarrow B(B(W^{p}(E;\mathbb{K})))$,
so that for every bounded operator $\Psi_{0}$ in $W^{p}(E;\mathbb{K})$
we get an induced action $*_{B}$ of $\operatorname{Par}(P)$ on $W^{p}(E;\mathbb{K})$
given by $\varepsilon*_{B}\varphi=[\rho_{0}(\varepsilon)(\Psi_{0})](\varphi)$.
Thus, the rule $\varepsilon\mapsto\Psi_{1,\varepsilon}=\rho_{0}(\varepsilon)(\Psi_{0})$
is homomorphic and defines a nice parameterized operator, as desired;
\item \emph{we can consider NPT of higher degrees}. In all previous points
we worked with $\ell=1$, i.e, with NPT of degree one. But everything
remains true for higher degrees, in virtue of Remark \ref{example_higher_degree_square_roots}.
Physically, we can consider theories which have not a single (scalar,
matrix and so on) fundamental parameter, but many of them, all of
same nature (i.e, all scalar or all matrix, etc.).
\end{enumerate}
$\quad\;\,$If basically everything in Example \ref{main_example}
can be generalized, what are the main difficulties in generalizing
this work? Here are two of them:
\begin{enumerate}
\item \emph{Euclidean background}. Although our results do not require \emph{explicitly}
an Euclidean background, the examples connected with physics (as those
discussed above) depend on such structure. Indeed, this appear in
the construction of the Hilbert space $\mathcal{H}(E)$, which typically
is built from metric structures on the spacetime $M$ and on the field
bundle $E$. Furthermore, we also need to regard differential operators
in $E$ as bounded operators on $\mathcal{H}(E)$, which is done (in
the Euclidean case) by building Sobolev spaces. This is closely related
to the problem of canonical/algebra quantization, where the classical
observables (differential operators) are represented by bounded operators
in a Hilbert space {[}??,??{]}. Looking at this point of view, the
difficulties of avoiding the Euclidean signature are clear. Even so,
if we insist in working on Lorentzian spacetimes $(M,g)$, one can
take induced Riemannian metrics $g_{X}=g+2\omega\otimes\omega$, where
$\omega$ is the 1-form corresponding to a unitary timelike vector
field $X$ in $M$ \citep{lorentz_riemann_1}, and consider our results
on $(M,g_{X})$;
\item \emph{existence of right-inverses}. This is directly related to the
first task. Indeed, recall that in Example \ref{main_example} we
assumed that the differential operators $D_{2,i}$ are of constant
coefficients. Added to a global definition of their Green functions,
this ensured that they are right-invertible. More generally, we could
assume elipticity conditions which are the typical approach to ensure
invertibility \citep{right_inv_1,right_inv_1_2,right_inv_2,right_inv_3,right_inv_4}.
In the \emph{Euclidean }setting, typical operators arising in Physics
are elliptic and additional conditions on their coefficients allows
right-invertibility. On the other hand, in the \emph{Lorentzian }setting,
the same operators become hyperbolic and we cannot use the standard
techniques to ensure right-invertibility.
\end{enumerate}
Thus, the final conclusion is the following:$\underset{\underset{\;}{\;}}{\;}$

\noindent \textbf{Conclusion 2.} In a compact (or noncompact with
bounded geometry) \emph{Euclidean background} the typical real/complex
kinetic field theories with scalar/matrix/etc parameters emerge from
any multivariate real/complex \emph{elliptic} polynomial field theory
with scalar/matrix/etc parameters. Furthermore, our results cannot
be used to directly extend this conclusion to Lorentzian and non-elliptic
settings\footnote{Same kind of difficulties was found in \citep{costello2011renormalization}.}.

\section*{Acknowledgments}

The first author was supported by CAPES (grant number 88887.187703/2018-00).
Both authors would like to thank F\'abio Dadam for reading a preliminary
version of the text. The first author would like to thank Luiz Cleber
Tavares de Brito for introduced him to the subject of the paper many
years ago.

\bibliographystyle{plainnat}
\bibliography{global_EMERGENCE}

\end{document}